\newtheorem{theorem}{Theorem}
\newtheorem{definition}{Definition}
\newtheorem{example}{Example}
\newtheorem{corollary}{Corollary}
\newtheorem{lemma}{Lemma}
\newtheorem{remark}{Remark}
\begin{document}
%
\title{Multilevel constructions: coding, packing and geometric uniformity}
%
%
%

\author{Maiara~F.~Bollauf, 
        Ram~Zamir 
        and Sueli~I.~R.~Costa
\thanks{M. F. Bollauf was with the Institute of Mathematics, Statistics and Computer Science, University of Campinas as a Ph.D student and now she is with Dep. of Electrical and Computer Engineering, Texas A$\&$M University at Qatar (e-mail: maiara.bollauf@qatar.tamu.edu).}
\thanks{R. Zamir is with Dep. of Electrical Engineering-Systems in Tel Aviv University (e-mail: zamir@eng.tau.ac.il).}
\thanks{S. I. R. Costa is with the Institute of Mathematics, Statistics and Computer Science, University of Campinas (e-mail: sueli@ime.unicamp.br).}
\thanks{Partial and preliminary results of this paper were presented at IEEE International Symposium on
Information Theory, Barcelona, Spain, 2016 \cite{bollaufzamir} and at International Zurich Seminar, Zurich, Switzerland, 2018 \cite{bollaufzamircosta}.}}

\maketitle

\begin{abstract}

Lattice and special nonlattice multilevel constellations constructed from binary codes, such as Constructions A, C, and D, have relevant applications in Mathematics (sphere packing) and in Communication (multi-stage decoding and efficient vector quantization). In this work, we explore some properties of Construction C, in particular its geometric uniformity. We then propose a new multilevel construction, inspired by bit interleaved coded modulation (BICM), that we call Construction $C^\star$. We investigate the geometric uniformity, laticeness, and minimum distance properties of Construction $C^\star$ and discuss its superior packing efficiency when compared to Construction C.

\end{abstract}

\begin{IEEEkeywords}
Lattice constructions, Construction C, Construction $C^{\star},$ geometrically uniform constellation, minimum distance.
\end{IEEEkeywords}

%
\IEEEpeerreviewmaketitle

\section{Introduction}

	Lattices are discrete sets in $\mathbb{R}^n$ which are described as all integer linear combinations of a set of independent vectors. Some efficient lattice constructions are based on linear codes, a topic that have been stimulated by the comprehensive approach in Conway and Sloane \cite{conwaysloane}. Construction $C$ \cite{lee71} is a multilevel code construction \cite{imaihirakawa, ungerboeck}, which if based on linear codes satisfying certain nesting relations, forms a lattice.

	When we consider a single level with a linear code, Construction C is the well known lattice Construction $A$ \cite{conwaysloane}. However, when the number of levels $L > 1,$ the resulting construction is not always a lattice, even if the component codes are linear. The work of Kositwattanarerk and Oggier \cite{kositoggier} explored the relation between Construction C and the lattice Construction D. They showed that if we consider a family of nested linear binary codes $\mathcal{C}_{1} \subseteq \dots \subseteq \mathcal{C}_{L} \subseteq \mathbb{F}_{2}^{n},$ where this chain is closed under Schur product, then both constructions coincide and we obtain a lattice (an extension of this result for  codes over a $q-$ary alphabet can be found in \cite{streycosta}). 
Regarding general properties, for linear component codes case, all points of the constellation defined by Construction C have the same minimum distance \cite{conwaysloane}, but not necessarily the same kissing number (see Example \ref{cex2}). A notable example of a lattice Construction C is the Barnes-Wall lattice, generated by the family of Reed-Muller codes \cite{forney1}, \cite{forney}.

	Through multi-stage decoding, Construction C can achieve the high SNR uniform-input capacity of an additive white Gaussian noise (AWGN) channel asymptotically as the dimension $n$ goes to infinity \cite{forneytrottchung}. Another interesting example of nonlattice construction is presented by Agrell and Eriksson in \cite{agrelleriksson}, who proved that the ``$D_{n}+$" tessellation \cite{conwaysloane}, which is a $2-$level Construction C, exhibits a lower normalized second moment (and then a better quantization efficiency) than any known lattice tessellation in dimensions $7$ and $9.$ 
	
	Our first objective in this paper is to study the properties of a general Construction C with underlying binary codes, and to find out how close to a lattice this construction can be, in case it does not satisfy the conditions required in \cite{kositoggier}. We show that a two-level ($L=2$) Construction C is geometrically uniform (a result that can be deduced from \cite{forneyguc}), however for three levels or more ($L\geq 3$) the distance spectrum may vary between the points of the constellation. 
	
	A recent coded modulation scheme, referred to as \textit{bit interleaved coded modulation} (BICM) \cite{alvarado}, motivates the second contribution of this paper: the proposal and study of a new multilevel construction, Construction $C^\star,$ which arises as a generalization of Construction C.  
	 
%
		
	The BICM, first introduced by Zehavi \cite{zehavi}, requires mainly to have: a length$-nL$ binary code $\mathcal{C},$ an interleaver (permutation) $\pi,$ and a one-to-one binary labeling map $\tilde{\mu}: \{0,1\}^{L} \rightarrow \mathcal{X},$ where $\mathcal{X}$ is a signal set $\mathcal{X}=\{0,1, \dots, 2^{L}-1\}$ (alphabet). The code and interleaved bit sequence $c \in \mathcal{C}$ is partitioned into $L$ subsequences $c_i,$ $i=1, \dots, L,$ of length $n:$
\begin{equation*}
c=(c_{1}, \dots, c_{L}), \ \ \mbox{with} \ \ c_{i}=(c_{i1},c_{i2}, \dots, c_{in}). 
\end{equation*}
	
	The bits $c_{j}$ are mapped at a time index $j$ to a symbol $x_{j}$ chosen from the $2^{L}-$ary signal constellation $\mathcal{X},$ according to the binary labeling map $\tilde{\mu}.$ Hence, for a length$-nL$ binary code $\mathcal{C}$ to encode all bits, we have the scheme below:
	
{\small \begin{center}

$\boxed{\text{codeword} \ c}$ $\rightarrow$ $\boxed{\text{interleaver} \ \pi }$ $\rightarrow$ $\boxed{\text{partitioning into} \ $L$ \ \text{subsequences of length} \ $n$}$ $\rightarrow$ $\boxed{\text{mapping} \ \tilde{\mu}}$ 
\vspace{0.3cm}
$\rightarrow$ $\boxed{x_{j}=\tilde{\mu}(c_{1j}, \dots, c_{Lj}), \ j={1, \dots, n}}$

\end{center}}
	
	By defining the natural labeling $\mu: \mathcal{C} \rightarrow \mathcal{X}^{n}$ as $\mu(c_{1}, c_{2},, \dots, c_{L})=c_{1}+2c_{2}+\dots+2^{L-1}c_{L}$ and assuming $\pi(\mathcal{C})=\mathcal{C},$ it is possible to define an extended BICM constellation in a way very similar to the multilevel Construction C, which we denote by Construction $C^{\star}.$
	
	The constellation produced via Construction $C^\star$ is always a subset of the associated constellation produced via Construction C for the same projection codes and it does not usually results in a lattice. We explore here some facets of this original construction by mainly presenting a necessary and sufficient condition for it to be a lattice, and also describing the Leech lattice $\Lambda_{24}$ via Construction $C^\star.$ Besides that, we study some properties of Construction $C^\star,$ such as geometric uniformity and minimum distance, in order to analyze and compare packing efficiencies of Construction $C^\star$ and Construction C.
	
	This paper is organized as follows. Section II is devoted to preliminary concepts and results. In Section III we point out known properties of Construction C and present a detailed discussion about its geometric uniformity. In Section IV we exhibit general geometrically uniform constellations and as a consequence, an alternative proof to the geometric uniformity of an $L=2$ Construction C. In Section V we introduce Construction $C^{\star}$ and illustrate it with examples, including a special characterization of the Leech lattice. In Section VI we investigate properties of Construction $C^\star$ such as geometric uniformity and latticeness. Section VII is devoted to the study of the minimum distance of a constellation defined by Construction $C^\star$ and packing density relations between Constructions C and $C^\star.$ Section VIII brings an asymptotic analysis of a random Construction $C^\star$ comparing its packing efficiency with the best known packing efficiency of Construction C and finally Section IX concludes the paper.


\section{Lattice basics and constructions from linear codes}

	This section is devoted to present some basic concepts, results and notations \cite{conwaysloane, zamir2014} to be used in the next sections.

	We will denote by $+$ the real addition and by $\oplus$ the sum in $\mathbb{F}_{2},$ i.e., $x \oplus y=(x+y) \bmod 2.$
	
	 A \textit{linear binary code} $\mathcal{C}$ of length $n$ and rank $k$ is a linear subspace of dimension $k$ over the vector space $\mathbb{F}_{2}^{n}.$ It can also be written as the image of an injective linear transformation $\phi :  \mathbb{F}_{2}^{k} \rightarrow \mathbb{F}_{2}^{n},$ $(a_{1}, \dots, a_{k}) \mapsto  G_{\mathcal{C}} \cdot (a_{1}, \dots, a_{k})^{T},$ where $G_{\mathcal{C}} \in \mathbb{F}_{2}^{n \times k}$ is a \textit{generator matrix} of $\mathcal{C}.$ The columns of $G_{\mathcal{C}}$ compose a basis for $\mathcal{C}.$

	The \textit{Hamming distance} between two elements $x=(x_{1}, \dots, x_{n}),y=(y_{1}, \dots, y_{n}) \in \mathbb{F}_{2}^{n}$ is the number of different symbol in the two codewords,
	\begin{eqnarray}
	d_{H}(x,y)&=& |\{i: x_{i} \neq y_{i}, 1 \leq i \leq n\}|.
	\end{eqnarray}
The \textit{Hamming weight} $\omega(c)$ is the number of nonzero elements in a codeword $c \in \mathcal{C}.$
	
	The \textit{minimum distance} of a binary code $\mathcal{C}$ is the minimum Hamming distance between all distinct codewords, i.e.,
\begin{equation}
d_{H}(\mathcal{C})=d_{\min}(\mathcal{C})=\min \{d_{H}(x,y): x,y \in \mathcal{C}, x \neq y\}.
\end{equation}
For a linear code, the minimum distance is also the minimum Hamming weight of a nonzero codeword.

	A linear binary code of length $n$ and rank $k,$ with $2^k$ codewords and minimum distance $d=d_{\min}(\mathcal{C})$ is said to be an $(n,k,d)-$code. The \textit{rate} of such a code is 
\begin{equation}
	R=\frac{1}{n} \log_{2} 2^k= \frac{k}{n}\ \text{bits/symbol}.
\end{equation}
	


	A \textit{lattice} $\Lambda \subset \mathbb{R}^{N}$ is the set of all integer linear combinations of independent vectors $v_{1}, v_{2}, \dots, v_{n} \in \mathbb{R}^{N},$ $n \leq N$ and $\{v_1, v_2, \dots, v_n\}$ is called a basis of $\Lambda.$ A matrix $G_{\Lambda} \in \mathbb{R}^{N \times n},$ whose columns compose a basis of $\Lambda,$ is called \textit{generator matrix} of $\Lambda.$ In other words, a lattice is a discrete additive subgroup of $\mathbb{R}^{n}.$ We consider here only full rank $(n=N)$ lattices.

%

	For a lattice $\Lambda \subset \mathbb{R}^{n},$ the minimum distance is the smallest Euclidean distance between any two lattice points
\begin{eqnarray}
d_{E}(\Lambda)=d_{\min}(\Lambda)=\inf \{||x-y||: x,y \in \Lambda, x \neq y\}.
\end{eqnarray} 

	
	The \textit{Voronoi region} $\mathcal{V}(\lambda)$ of a lattice point $\lambda \in \Lambda$ is the open subset of $\mathbb{R}^{n}$ containing all points nearer to $\lambda$  than to any other lattice point. The closure of a Voronoi region tesselates $\mathbb{R}^{n}$ by translations given by lattice points.
	
	The \textit{packing radius} $r_{\text{pack}}(\Lambda)$ of a lattice $\Lambda$ is half of the minimum distance between lattice points and the \textit{packing density} $\Delta(\Lambda)$ is the fraction of  space that is covered by balls $\mathcal{B}(\lambda, r_{\text{pack}}(\Lambda))$ of radius $r_{\text{pack}}(\Lambda)$ in $\mathbb{R}^{n}$ centered at lattice points $\lambda \in \Lambda,$ i.e., 
\begin{equation}
\Delta(\Lambda)  =  \dfrac{vol \ B(0,r_{\text{pack}}(\Lambda))}{vol (\Lambda)},
\end{equation}
where $vol(\Lambda)=|\det(G_\Lambda)|=vol(\mathcal{V}(\lambda)).$  The \textit{effective radius} $r_{\text{eff}}(\Lambda)$ is the radius of a ball with the same volume, i.e., $vol(\Lambda)=V_n r_{\text{eff}}^n(\Lambda),$ where $V_n$ denotes the volume of the unit ball. The \textit{packing efficiency} is defined as 
\begin{equation}\label{eqpackeff}
\rho_{\text{pack}}(\Lambda) = \dfrac{r_{\text{pack}}(\Lambda)}{r_{\text{eff}}(\Lambda)}=(\Delta(\Lambda))^{1/n}.
\end{equation}

%
%

	A constellation\footnote{A constellation is a discrete set of points in $\mathbb{R}^n$.} $\Gamma \subset \mathbb{R}^{n}$ is said to be \textit{geometrically uniform} \cite{forneyguc} if for any two elements $c,c' \in \Gamma$ there exists a distance-preserving transformation $T$ such that $c'=T(c)$ and $T(\Gamma)=\Gamma.$
	Every lattice $\Lambda$ is geometrically uniform, due to the fact that any translation $\Lambda+x$ by a lattice point $x \in \Lambda$ is just $\Lambda$ and this implies that every point of the lattice has the same number of neighbors at each distance and all Voronoi regions are congruent, $\mathcal{V}(\lambda)=\mathcal{V}(0)+\lambda.$ 

	From linear codes, it is possible to derive lattice constellations using the well known Constructions $A$ and $D$ \cite{conwaysloane}.
	
\begin{definition}(Construction A) \label{constrA} Let $\mathcal{C}$ be an $(n,k,d)-$linear binary code. We define Construction A from $\mathcal{C}$\footnote{Construction A can be defined also using a nonlinear code, but then the resulting constellation is not a lattice.} as
\begin{equation}
\Lambda_{A}=\mathcal{C} + 2\mathbb{Z}^{n}.
\end{equation}
\end{definition}

\begin{definition} (Construction D)  Let $\mathcal{C}_{1} \subseteq \dots \subseteq \mathcal{C}_{L} \subseteq \mathbb{F}_{2}^{n}$ be a family of nested linear binary codes. 
	Let $k_{i}=\dim(\mathcal{C}_{i})$ and let $\{b_{1}, b_{2}, \dots, b_{n}\}$ be a basis of $\mathbb{F}_{2}^{n}$ such that $\{b_{1}, \dots, b_{k_{i}}\}$ span $\mathcal{C}_{i}.$ The lattice $\Lambda_D$ consists of all vectors of the form
	\begin{equation}
	\Lambda_{D}= \displaystyle\sum_{i=1}^{L} 2^{i-1} \displaystyle\sum_{j=1}^{k_{i}} \alpha_{j}^{(i)} \psi(b_{j})+2^{L}z
	\end{equation}
	where $\alpha_{j}^{(i)} \in \{0,1\},$ $z \in \mathbb{Z}^{n}$ and $\psi$ is the natural embedding of $\mathbb{F}_{2}^n$ in $\mathbb{Z}^n.$ 
\end{definition}

	Another well studied construction, that in general does not produce lattice constellation, even when the underlying codes are linear is Construction C, defined below following the Code-Formula terminology from \cite{forney1} (more details and applications also in  \cite{agrelleriksson}). Observe that this particular case does not enforce the minimum distance conditions imposed in \cite[pp. 150]{conwaysloane}. 
	
\begin{definition}(\textit{Construction C})  Consider $L$ binary codes $\mathcal{C}_{1}, \dots, \mathcal{C}_{L} \subseteq \mathbb{F}_{2}^{n},$ not necessarily nested or linear. The infinite constellation $\Gamma_{C}$ in $\mathbb{R}^{n},$ called Construction C, is defined as:
	\begin{equation} \label{eqC}
	\Gamma_{C}=\mathcal{C}_{1}+2\mathcal{C}_{2}+ \dots + 2^{L-1}\mathcal{C}_{L}+2^{L}\mathbb{Z}^{n},
	\end{equation} 
	or equivalently
	\begin{eqnarray}
	\Gamma_{C} &:=& \{c_1 + 2c_2 + \dots + 2^{L-1} c_L + 2^L z: c_i \in \mathcal{C}_i, i=1,\dots,L, \ z \in \mathbb{Z}^n\}.
	\end{eqnarray}
\end{definition}

	Note that if $L=1,$ i.e., if we consider a single level with a linear code, then this construction reduces to lattice Construction A (Definition \ref{constrA}). 


\begin{example}(A nonlattice Construction C) Consider $\mathcal{C}_{1}=\{(0,0),(1,1)\}$ and $\mathcal{C}_{2}=\{(0,0)\}.$ The $2-$level Construction $C$ from these codes is given by $\Gamma_{C}=\mathcal{C}_{1}+2\mathcal{C}_{2}+4\mathbb{Z}^{2}.$ Geometrically, we can see this constellation in Figure \ref{ex1c} and clearly $\Gamma_{C}$ is not a lattice.
	
	\begin{figure}[H]
		\begin{center}
			\includegraphics[height=5.1cm]{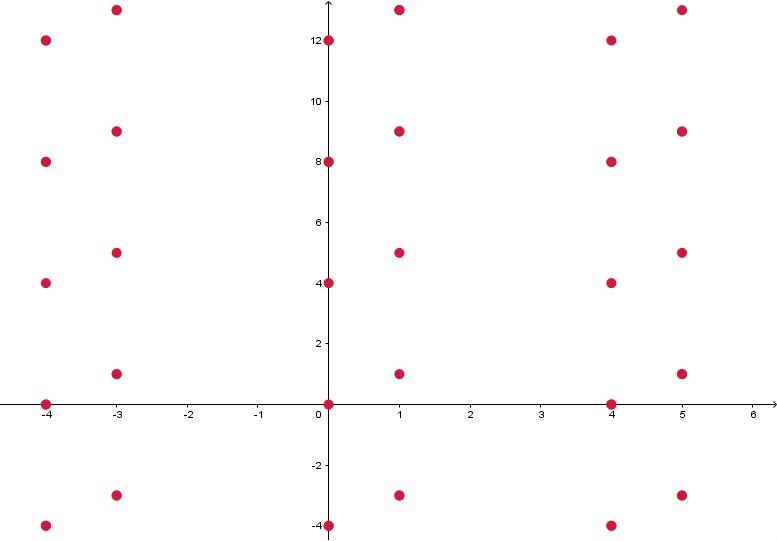}  
		   	\caption{{$2$- level Construction C from $\mathcal{C}_{1}=\{(0,0),(1,1)\}$ and $\mathcal{C}_{2}=\{(0,0)\}$}.} \label{ex1c}
		\end{center}
	\end{figure}
	
\end{example}

\begin{definition} (\textit{Schur product}) For $x=(x_{1}, \dots, x_{n})$ and $y=(y_{1}, \dots, y_{n})$ both in $\mathbb{F}_{2}^{n},$ we define $x \ast y = (x_{1}y_{1}, \dots, x_{n}y_{n}).$
\end{definition}

	Consider $\psi: \mathbb{F}_2^{n} \rightarrow \mathbb{R}^{n}$ as the natural embedding. Then, for $x, y \in \mathbb{F}_{2}^{n},$ it is valid that
\begin{equation}\label{eqss}
\psi(x)+\psi(y)=\psi(x+y)+2\psi(x \ast y).
\end{equation} 
In order to simplify, we abuse the notation, writing Equation \eqref{eqss} as	
\begin{equation} \label{sum}
x+y=x \oplus y + 2(x \ast y).
\end{equation}

	Kositwattanarerk and Oggier \cite{kositoggier} provided a condition that, if satisfied, guarantees that Construction C will provide a lattice which coincides with Construction D.

\begin{theorem} \cite{kositoggier} \label{thmko} (\textit{Latticeness of Construction C}) Given a family of nested binary linear codes $\mathcal{C}_{1} \subseteq \dots \subseteq \mathcal{C}_{L} \subseteq \mathbb{F}_{2}^{n},$ then the following statements are equivalent:
	\begin{itemize}[\IEEEsetlabelwidth{Z}]
		\item[1.] $\Gamma_{C}$ is a lattice.
		\item[2.] $\mathcal{C}_{1} \subseteq \dots \subseteq \mathcal{C}_{L} \subseteq \mathbb{F}_{2}^{n}$ is closed under Schur product.
		\item[3.] $\Gamma_{C}=\Lambda_{D}.$ 
	\end{itemize}
\end{theorem}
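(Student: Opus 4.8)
The three conditions are most naturally arranged in the cycle $3 \Rightarrow 1 \Rightarrow 2 \Rightarrow 3$. The implication $3 \Rightarrow 1$ is immediate, since $\Lambda_D$ is by definition the set of integer combinations of the fixed vectors $2^{i-1}\psi(b_j)$ and $2^{L}e_k$, hence a lattice; so if $\Gamma_C = \Lambda_D$ then $\Gamma_C$ is a lattice. The whole difficulty is concentrated in the remaining two implications, and the single tool driving both is the Schur relation \eqref{sum}, which rewrites the real sum of two binary vectors as a mod-$2$ codeword plus a \emph{doubled carry} $2(x\ast y)$. Informally, Construction C fails to be closed under addition exactly because these carries spill one level up, and condition $2$ is precisely the hypothesis that makes every spilled carry land back inside the code chain.

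For $1 \Rightarrow 2$ I would isolate a single Schur product. Fix $a,b\in\mathcal{C}_i$ and consider the points $2^{i-1}\psi(a)$ and $2^{i-1}\psi(b)$, both in $\Gamma_C$. By \eqref{sum},
\[
2^{i-1}\psi(a)+2^{i-1}\psi(b)=2^{i-1}\psi(a\oplus b)+2^{i}\psi(a\ast b).
\]
Assuming $\Gamma_C$ is a lattice, the left-hand side lies in $\Gamma_C$; since $a\oplus b\in\mathcal{C}_i$ the term $2^{i-1}\psi(a\oplus b)$ does too, so by the group property $2^{i}\psi(a\ast b)\in\Gamma_C$. Writing this element in the canonical form \eqref{eqC} and reducing each coordinate modulo $2^{i+1}$, uniqueness of the binary expansion forces its $(i+1)$-st bit-plane to equal $a\ast b$ while all lower planes vanish; as that bit-plane is a codeword of $\mathcal{C}_{i+1}$, we get $a\ast b\in\mathcal{C}_{i+1}$ (with the convention $\mathcal{C}_{L+1}=\mathbb{F}_2^{n}$, the case $i=L$ being vacuous). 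Together with the nesting this is exactly the Schur-closure of the chain, i.e. condition $2$.

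For $2 \Rightarrow 3$ I would first upgrade condition $2$ to closure of $\Gamma_C$ under addition. Taking general $x=\sum_i 2^{i-1}\psi(c_i)+2^{L}z$ and $y=\sum_i 2^{i-1}\psi(c_i')+2^{L}z'$ and applying \eqref{sum} level by level gives $x+y=\sum_i 2^{i-1}\psi(c_i\oplus c_i')+\sum_i 2^{i}\psi(c_i\ast c_i')+2^{L}(z+z')$; Schur-closure places each carry $c_i\ast c_i'$ into $\mathcal{C}_{i+1}$, so after re-indexing the carries sit at their correct level and can be reabsorbed. Since reabsorption itself creates fresh higher carries, this must be organised as an induction on the levels (equivalently, on the $2$-adic valuation of the carry vector), the base case being that any carry reaching level $L$ disappears into $2^{L}\mathbb{Z}^{n}$. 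This shows $\Gamma_C$ is a subgroup of $\mathbb{Z}^{n}$ containing $2^{L}\mathbb{Z}^{n}$, hence a full-rank lattice. With $\Gamma_C$ now a group, $\Lambda_D\subseteq\Gamma_C$ is clear because each generator $2^{i-1}\psi(b_j)$ of $\Lambda_D$ has $b_j\in\mathcal{C}_i$; conversely, each group generator $2^{i-1}\psi(c)$ of $\Gamma_C$ is rewritten, via \eqref{sum} applied to $c=\bigoplus_j \alpha_j^{(i)}b_j$, as $\sum_j \alpha_j^{(i)}2^{i-1}\psi(b_j)$ minus $2^{i}$ times a sum of Schur products $b_j\ast b_l\in\mathcal{C}_{i+1}$, and the same level-induction shows the correction lies in $\Lambda_D$, giving $\Gamma_C\subseteq\Lambda_D$.

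The main obstacle throughout is the bookkeeping of carries: a single application of \eqref{sum} never finishes the job, because reabsorbing a carry at level $i+1$ generates a new carry at level $i+2$. The crux is therefore to set up the right inductive invariant — that under Schur-closure every carry produced at any stage is the embedding of a codeword of the next code in the chain — so that the cascade provably terminates at level $L$ inside $2^{L}\mathbb{Z}^{n}$. Once this invariant is isolated, all three implications follow from essentially the same computation.
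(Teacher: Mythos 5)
The paper does not actually prove this theorem: it is quoted from Kositwattanarerk and Oggier \cite{kositoggier}, so there is no internal proof to compare against. Your reconstruction is correct in its essentials, and it follows the same carry-bookkeeping route that the paper later redevelops for Construction $C^\star$: your inductive invariant (``every carry produced at any stage is a codeword of the next code in the chain'') is exactly the content of Lemma \ref{lemmasum} and the intuition paragraph preceding it, and your cycle $1 \Rightarrow 2 \Rightarrow 3$ matches the specialization to $\mathcal{C}=\mathcal{C}_1\times\dots\times\mathcal{C}_L$ that the paper records in Remark \ref{remarkc} as a consequence of Theorem \ref{thmcomplete}. Your $1\Rightarrow 2$ step (isolating $2^{i}\psi(a\ast b)\in\Gamma_C$ by the group property and invoking uniqueness of the binary expansion of coordinates lying in $[0,2^L)$) is sound, as is the downward induction on levels for $\Gamma_C\subseteq\Lambda_D$. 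Two loose ends deserve a sentence each in a full write-up. First, in $3\Rightarrow 1$ you claim $\Lambda_D$ is ``by definition'' a lattice; the paper's definition of Construction D restricts the coefficients $\alpha_j^{(i)}$ to $\{0,1\}$, so latticeness of $\Lambda_D$ is itself a standard theorem (relying on the nesting and the choice of basis) rather than a definition --- citing it as known is acceptable, but it is not immediate. Second, in $2\Rightarrow 3$ closure under addition alone does not make $\Gamma_C$ a subgroup of $\mathbb{Z}^n$; you must also produce $-x\in\Gamma_C$. This is easily repaired exactly as the paper does in the proof of Theorem \ref{thmcomplete}: summing $2^L-1$ copies of $x$ and translating by a suitable element of $2^L\mathbb{Z}^n$ (using the $2^L\mathbb{Z}^n$-periodicity of $\Gamma_C$) yields $-x$. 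With these two patches your proof is complete.
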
 


\begin{example}($D_{n}+$ as Construction C) The $D_{n}+$ tessellation \cite{conwaysloane} can be written as a $2-$level Construction C if we consider $\mathcal{C}_{1}$ as the $(n,1,n)-$repetition code and $\mathcal{C}_{2}$ as the $(n,n-1,2)-$even parity check code. For $n$ even, this construction represents a lattice, because we would have nested linear codes that are closed under Schur product. Otherwise, when $n$ is odd, we obtain a nonlattice constellation Construction C. In particular, for dimensions $n=7$ and $9$ it was proved that $D_{n}+$ has a lower normalized second moment than any known lattice tessellation \cite{agrelleriksson}. 
\end{example}


\section{Properties of Construction C}

	We summarize next some known properties of Construction C already explored in the literature (\ref{secMD}, \ref{seckn}, \ref{subsgu}) and we present a counterexample to show that for three levels and up, Construction C is not geometrically uniform in general (\ref{seceds3}).
	
\subsection{Minimum distance} \label{secMD}

 	If the underlying codes of Construction C are linear, then the squared minimum distance can be expressed \cite[pp. 150]{conwaysloane} as
\begin{eqnarray} \label{dminconstrc}
d_{\min}^{2}(\Gamma_{C})=\min \{d_{H}(\mathcal{C}_{1}), 2^{2}d_{H}(\mathcal{C}_{2}), \dots, 2^{2(L-1)}d_{H}(\mathcal{C}_{L}), 2^{2L}\}.
\end{eqnarray}

	Indeed, observe that sets defined as $\Gamma_{\mathcal{C}_i}=\{0 + 2 \cdot 0 + \dots + 2^{i-1}c_i + \dots + 2^{L-1} \cdot 0 + 2^{L}  \cdot 0,~ c_i \in \mathcal{C}_i\},$ where $0 \in \mathbb{R}^{n},$ are subsets of $\Gamma_{C},$ i.e., $\Gamma_{\mathcal{C}_i} \subseteq \Gamma_{C}$ for all $i=1, \dots, L,$ then it follows that $d_{E_{min}}^{2}(\Gamma_{C}) \leq \min \{d_{H}(\mathcal{C}_{1}), 2^{2}d_{H}(\mathcal{C}_{2}), \dots, 2^{2(L-1)}d_{H}(\mathcal{C}_{L}), 2^{2L}\}.$
	
	On the other hand, according to the discussion in \cite[pp. 150]{conwaysloane}, if the lowest component in which two elements in $\Gamma_{C}$ differ is the $i-$th component, then their squared distance is always greater than or equal to $2^{(i-1)}d_H(\mathcal{C}_i),$ for $i=1, \dots, L.$ Hence, it follows that $d_{E_{min}}^{2}(\Gamma_{C}) \geq \min \{d_{H}(\mathcal{C}_{1}), 2^{2}d_{H}(\mathcal{C}_{2}), \dots, 2^{2(L-1)}d_{H}(\mathcal{C}_{L}),$ $2^{2L}\}$ and we conclude that Equation \eqref{dminconstrc} holds.  

	From the formula of the squared minimum distance, we can also conclude that all points in Construction C attains the same minimum distance. 

\subsection{Kissing number} \label{seckn}

	The kissing number (number of nearest neighbors) of an element of Construction C may vary between the elements even when the underlying codes are linear, as it can be seen in the $3-$level construction in Example \ref{cex2}, Section \ref{seceds3}. It follows, in particular, that the points of Construction C may not have the same distance spectrum, hence it is not geometrically uniform in general.

\subsection{Geometric uniformity} \label{subsgu}

 	The geometric uniformity of a two-level ($L=2$) Construction C, i.e. $\Gamma_C=\mathcal{C}_1+2\mathcal{C}_2+4\mathbb{Z}^n,$ can be deduced from the work of G. D. Forney \cite{forneyguc} if we consider a $2-$level Construction C as a group code with isometric labeling over $\mathbb{Z}/4\mathbb{Z}$ (i.e., a $2-$level binary coset code over $\mathbb{Z}/ 2\mathbb{Z} / 4\mathbb{Z}).$ He proved that this type of construction produces a geometrically uniform generalized coset code.  In Section \ref{sec4}, we provide an alternative proof, based on explicit isometric transformation. 

\subsection{Equi-distance spectrum and geometric uniformity for \ $L\geq 3$} \label{seceds3}

 Geometric uniformity implies, in particular, that all points have the same set of Euclidean distances to their neighbors. 

\begin{definition} (\textit{Distance spectrum}) For a discrete constellation $\Gamma \subset \mathbb{R}^{n},$ the distance spectrum is
	\begin{center}
		$N(c,d)=$ number of points in the constellation at an Euclidean distance $d$ from an element $c$ in the constellation.
\end{center}
\end{definition}

\begin{definition} (\textit{Equi-distance spectrum}) A discrete constellation $\Gamma \subset \mathbb{R}^n$ is said to have equi-distance spectrum (EDS) if $N(c,d)$ is the same for all $c \in \Gamma.$
\end{definition}

	We also introduce here the terminology  of \textit{equi-minimum distance}, where each point in the constellation has at least one neighbor at minimum distance, i.e., for all $c \in \Gamma,$ there exist  $c' \in \Gamma,$ such that $d(c,c')=d_{\min}(\Gamma).$

	Since geometric uniformity implies equi-distance spectrum, in particular for a $2-$level Construction C, $N(c,d)=N(0,d)$ for all $c \in \Gamma_{C}.$ However for $L \geq 3$ the equi-distance spectrum and hence the geometric uniformity property does not hold in general, as we will see in the upcoming example. 

\begin{example}(Non-EDS two dimensional Construction C) \label{cex2} Consider an $n=2$ and $L=3$ Construction C with the following three linear component codes: 
	\begin{eqnarray}
	\mathcal{C}_{1}=\mathcal{C}_{2}=\{(0,0),(1,1)\}, ~ \mathcal{C}_{3}=\{(0,0)\}.
	\end{eqnarray}
We can write $\Gamma_{C}= \mathcal{C}_{1}+2\mathcal{C}_{2}+ 4\mathcal{C}_{3}+8\mathbb{Z}^{3}$ (Figure \ref{p2}) in this case as $\Gamma_{C}=\{(8k_{1}+j, 8k_{2}+j): k_{1}, k_{2} \in \mathbb{Z}, j =0,1,2,3 \}.$ One can notice that $N((3,3),\sqrt{2})=1 \neq 2 = N((1,1),\sqrt{2}),$ so it is not geometrically uniform.
\begin{figure}[h!]
\begin{center}
		\includegraphics[height=5.1cm]{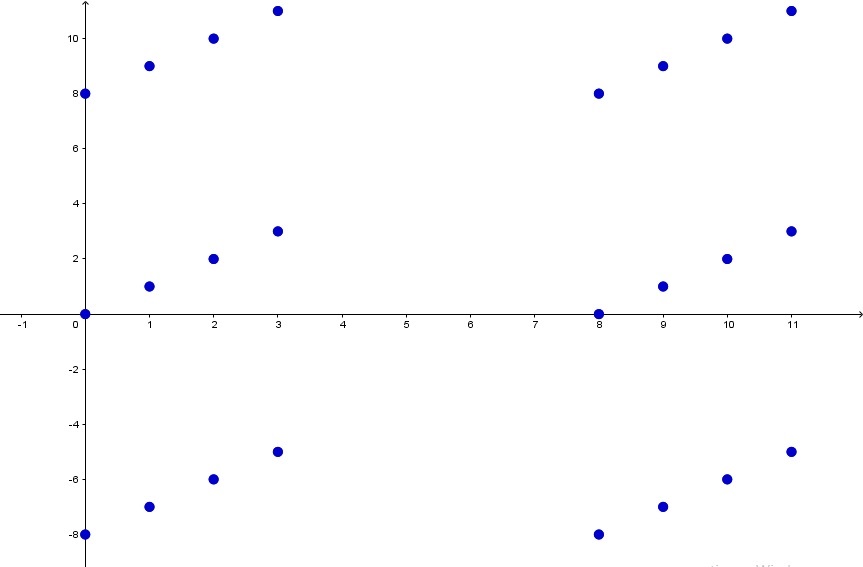}  
\caption{{Some elements of Construction C, with $\mathcal{C}_{1}=\mathcal{C}_{2}=\{(0,0),(1,1)\}$ and $\mathcal{C}_{3}=\{(0,0)\}.$}} \label{p2}
\end{center}
\end{figure}
\end{example}


\section{General geometrically uniform constellations} \label{sec4}

	The next theorem provides a way to construct a geometrically uniform constellation, from which we can derive the geometric uniformity of Construction C for $L=2$ levels.

	
\begin{theorem}\label{generalgu} (\textit{Geometric uniformity of $\Lambda+\mathcal{C}$}) If $\Lambda \subset \mathbb{R}^n$ is a lattice which is symmetric with respect to all coordinate axes, i.e., $(x_1,\dots, x_n) \in \Lambda$ then $(x_1,\dots, x_{i-1}, -x_i, x_{i+1}, \dots, x_n) \in \Lambda$ for all $i$ and $\mathcal{C} \subseteq \mathbb{F}_{2}^{n}$ is a linear binary code, then $\Gamma=\Lambda+\mathcal{C}$ is geometrically uniform.
\end{theorem}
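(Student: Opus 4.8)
The plan is to exhibit, for any two points of $\Gamma=\Lambda+\mathcal{C}$, an explicit distance-preserving transformation $T$ carrying one to the other while fixing $\Gamma$ setwise. Since $\Gamma$ is a union of cosets of the lattice $\Lambda$, it suffices to produce two kinds of symmetries: lattice translations, which handle movement within a coset, and sign-change isometries, which handle movement between cosets. Concretely, I would first observe that for any $\lambda \in \Lambda$ the translation $T_\lambda(x)=x+\lambda$ is an isometry satisfying $T_\lambda(\Gamma)=\Gamma$, because $\Gamma=\Lambda+\mathcal{C}=(\Lambda+\lambda)+\mathcal{C}$. Thus it remains only to connect the finitely many cosets represented by codewords $c \in \mathcal{C}$.

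The key step is to handle a single codeword $c=(c_1,\dots,c_n)\in\mathcal{C}$. Let $S=\{i : c_i=1\}$ be its support, and define the sign-flip map $T_c$ that sends $(x_1,\dots,x_n)$ to the vector with $x_i$ replaced by $-x_i$ for $i\in S$ and unchanged otherwise. This is clearly an isometry (it is an orthogonal reflection across the coordinate hyperplanes indexed by $S$). I would then compose it with the translation by $\psi(c)$, i.e. set $T(x)=\psi(c)+T_c(x)$, where $\psi$ is the natural embedding. The point of the sign flip is that flipping the sign on the support of $c$ converts the real addition of $\psi(c)$ into the binary XOR, which keeps us inside $\mathcal{C}$: for a coordinate with $c_i=1$, we have $1 + (-c'_i) \equiv 1\oplus c'_i \pmod 2$ up to an even shift absorbed by $\Lambda$, while the coordinate-axis symmetry of $\Lambda$ guarantees that the sign flip maps $\Lambda$ to itself.

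To make this precise I would verify $T(\Gamma)=\Gamma$ by a direct coordinatewise check. Take an arbitrary element $\psi(c')+\lambda\in\Gamma$ with $c'\in\mathcal{C}$, $\lambda\in\Lambda$. Applying $T$ gives $\psi(c)+T_c(\psi(c'))+T_c(\lambda)$. Since $\Lambda$ is symmetric under sign changes on any subset of axes, $T_c(\lambda)\in\Lambda$. For the code part, I would show coordinatewise that $\psi(c)+T_c(\psi(c'))$ equals $\psi(c\oplus c')$ modulo an integer vector that is even on every coordinate, hence absorbed into $2\mathbb{Z}^n\subseteq\Lambda$ — using the identity \eqref{sum} together with the fact that on the support of $c$ the sign flip turns $1-c'_i\in\{0,1\}$ into the XOR value and off the support $T_c$ acts trivially. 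Because $\mathcal{C}$ is linear, $c\oplus c'\in\mathcal{C}$, so the image lies in $\Gamma$; the same argument applied to $T^{-1}$ (which has the same form) gives the reverse inclusion, establishing $T(\Gamma)=\Gamma$.

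The main obstacle I anticipate is the bookkeeping in this last verification: one must track carefully, coordinate by coordinate, how the real translation by $\psi(c)$ interacts with the sign flip $T_c$ and confirm that every discrepancy between $\psi(c)+T_c(\psi(c'))$ and $\psi(c\oplus c')$ is an even integer absorbed by $\Lambda$. Here the hypothesis $2\mathbb{Z}^n\subseteq\Lambda$ (implicit in any code-based construction, and in any case implied by combining coordinate-axis symmetry with $\Lambda$ being a full-rank integer lattice containing the relevant differences) and the coordinate-axis symmetry are both essential and must be invoked at exactly the right places. Once the single-codeword case is settled, the general case follows immediately: to move between arbitrary points $\psi(c)+\lambda$ and $\psi(c'')+\mu$ of $\Gamma$, compose the codeword isometry connecting the cosets of $c$ and $c''$ with an appropriate lattice translation, yielding the desired $T$ and completing the proof of geometric uniformity.
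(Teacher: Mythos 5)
Your core construction is the same as the paper's: the sign-flip isometry supported on a codeword (the paper's diagonal matrix $T_{c_1}$ with entries $(-1)^{c_{1i}}$ is exactly your $T_c$), composed with translations, with axes-symmetry keeping $\Lambda$ invariant and the sign flip converting real addition of codewords into XOR. The paper packages this as a single map $F_x(y)=T_{c_1}(y-x)$ sending $x$ to $0$ and then moves $x$ to $w$ via $F_w^{-1}\circ F_x$, while you factor the motion into within-coset translations and between-coset maps $T(x)=\psi(c)+T_c(x)$; this difference is cosmetic and your composition argument at the end is fine.

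The one point you must repair is the appeal to ``$2\mathbb{Z}^n\subseteq\Lambda$,'' which you call a hypothesis and describe as essential. It is neither a hypothesis of the theorem nor a consequence of its hypotheses: $\Lambda=3\mathbb{Z}^n$, $\Lambda=\sqrt{2}\,\mathbb{Z}^n$ and the hexagonal lattice are all symmetric with respect to the coordinate axes (the theorem does not even require $\Lambda$ to be an integer lattice), yet none contains $2\mathbb{Z}^n$; moreover the paper's own Corollary \ref{corogu} applies the theorem to $\Lambda=2^{L-i}(\mathcal{C}_L+2\mathbb{Z}^n)$, which fails to contain $2\mathbb{Z}^n$ whenever $L-i\geq 2$. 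So if your verification really produced a nonzero even discrepancy to be ``absorbed,'' the proof would break for exactly the lattices the theorem is meant to cover. Fortunately the discrepancy is identically zero: checking the four cases $(c_i,c_i')\in\{0,1\}^2$ gives $c_i+(-1)^{c_i}c_i'=c_i\oplus c_i'$ in every case, i.e. $\psi(c)+T_c(\psi(c'))=\psi(c\oplus c')$ exactly --- this is precisely the paper's identity $T_{c_1}(c_2-c_1)=c_1\oplus c_2$. Replace the even-vector hedge by this four-case computation and your argument (including the inverse-map step for surjectivity and the final composition) is complete and matches the paper's.
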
 

\begin{proof} Given $x=\lambda_1+c_1 \in \Gamma,$ where $\lambda_1 \in \Lambda$ and $c_1 \in \mathcal{C},$ consider first the linear map $T_{c_1}: \mathbb{R}^{n} \rightarrow \mathbb{R}^{n},$ $T_{c_1}(z)=[T_{c_1}] \cdot z$ ($z$ in the column format), where $[T_{c_{1}}]$ is defined as

\begin{equation} \label{eqT}
[T_{c_1}]=\begin{pmatrix}
	(-1)^{c_{11}} & 0 & \dots & 0 \\
	0 & (-1)^{c_{12}} & \dots & 0  \\
	0 & 0 & \ddots & 0 \\
	0 & 0 & \dots & (-1)^{c_{1n}} 
	\end{pmatrix}_{(n \times n)},
\end{equation}
$c_{1}=(c_{11}, c_{12}, \dots, c_{1n}).$ Observe that $T_{c_1}$ is an isometry and ${T_{c_1}}^{-1}=T_{c_1}.$ 

	The map $F_x: \mathbb{R}^{n} \rightarrow \mathbb{R}^{n},$ $F_x(y)=T_{c_1}(y-x)$ is an isometry and we show next that its restriction $F_x|_{\Gamma}: \Gamma \rightarrow \Gamma$ is also an isometry with $F_x(x)=0.$ 
	
	First, note that for $c_{1}, c_{2} \in \mathcal{C}$ it holds that $T_{c_1}(c_{2}-c_{1})=c_{1} \oplus c_{2}.$ In fact,
	
\begin{equation}
(T_{c_1}(c_{2}-c_{1}))_{i} = \begin{cases}
0, & \text{if} \ (c_{1i},c_{2i})=(0,0) \\
1, & \text{if} \ (c_{1i},c_{2i})=(1,0) \\
1, & \text{if} \ (c_{1i},c_{2i})=(0,1) \\
0, & \text{if} \ (c_{1i},c_{2i})=(1,1)
\end{cases}
\end{equation}
what implies $T_{c_1}(c_2 - c_1)=c_{1} \oplus c_{2}.$ 

	Given $y \in \Gamma= \Lambda+ \mathcal{C},$ $y=\lambda_{2}+c_{2},$
\begin{eqnarray}
F_{x}(y)=T_{c_{1}}(y-x)& = &T_{c_1}(\lambda_2-\lambda_1 +c_2 -c_1) = T_{c_1}(\lambda_2 - \lambda_1) + T_{c_{1}}(c_{2}-c_1)  \nonumber \\
&=& \lambda_3 + (c_{1} \oplus c_{2}) \in \Gamma= \Lambda+ \mathcal{C},
\end{eqnarray}

since $\Lambda$ is axes-symmetric. Therefore, $F_x(\Gamma) \subseteq \Gamma.$

	As $F_x$ is injective, it remains to prove that for any $w= \tilde{\lambda}+\tilde{c} \in \Gamma$ there exists $y \in \Gamma$ such that $w=F_x(y).$ By straightforward calculation we can see that 
\begin{eqnarray}
F_{x}(y)= \tilde{\lambda}+ \tilde{c} & \Rightarrow & T_{c_1}(y-(\lambda_1 + c_1)) = \tilde{\lambda}+ \tilde{c} \Rightarrow T_{c_1}(T_{c_1}(y-(\lambda_1 + c_1))) = T_{c_{1}}(\tilde{\lambda}+ \tilde{c}) \nonumber \\
& \Rightarrow & y = T_{c_1} (\tilde{\lambda}) + \lambda_1 + T_{c_1}(\tilde{c}) + c_1 = T_{c_1} (\tilde{\lambda}) + \lambda_1 + T_{c_1}(\tilde{c} + 0 - c_{1}) \nonumber \\
& \Rightarrow & y = T_{c_1} (\tilde{\lambda}) + \lambda_1 + T_{c_1}(\tilde{c} - c_{1}) = \underbrace{T_{c_1} (\tilde{\lambda}) + \lambda_1}_{ \in \Lambda} + \underbrace{\tilde{c} \oplus c_{1}}_{\in \mathcal{C}} \in \Gamma.
\end{eqnarray}

	To conclude the proof, given any $ x \in \Gamma$ and $w \in \Gamma,$ we can consider the isometry $F_{w}^{-1} \circ F_x$ which takes $x$ to $w.$ 
\end{proof}

\begin{corollary} \label{corogu} (Special geometrically uniform Construction C) If an $L-$level Construction C has just two nonzero linear codes $C_{i}$ and $C_{L},$ for some $1 \leq i \leq L-1,$ then $\Gamma_{C}=2^{i-1}\mathcal{C}_{i}+2^{L-1}\mathcal{C}_{L}+2^{L}\mathbb{Z}^{n}$ is geometrically uniform. 
\end{corollary}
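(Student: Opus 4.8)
The plan is to reduce the claim to a direct application of Theorem~\ref{generalgu}. The key observation is that geometric uniformity is invariant under scaling: if $s>0$ and $\Gamma$ is geometrically uniform via isometries $T$ that preserve $\Gamma$, then the conjugated maps $y \mapsto sT(y/s)$ are again isometries of $\mathbb{R}^{n}$ preserving $s\Gamma$, so $s\Gamma$ is geometrically uniform as well. I would therefore first rescale $\Gamma_{C}$ by the factor $1/2^{i-1}$, turning the constellation into
\begin{equation*}
\Gamma' = \tfrac{1}{2^{i-1}}\Gamma_{C} = \mathcal{C}_{i} + 2^{L-i}\mathcal{C}_{L} + 2^{L-i+1}\mathbb{Z}^{n},
\end{equation*}
after which it suffices to prove that $\Gamma'$ is geometrically uniform.

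Next I would exhibit $\Gamma'$ in the form $\Lambda+\mathcal{C}$ required by Theorem~\ref{generalgu}. Setting $m=L-i\geq 1$, I take $\Lambda = 2^{m}\mathcal{C}_{L}+2^{m+1}\mathbb{Z}^{n} = 2^{m}(\mathcal{C}_{L}+2\mathbb{Z}^{n})$ and $\mathcal{C}=\mathcal{C}_{i}$, so that $\Gamma'=\Lambda+\mathcal{C}$. Here $\mathcal{C}=\mathcal{C}_{i}$ is a linear binary code with entries in $\{0,1\}$, exactly as the hypothesis of the theorem demands; the rescaling was performed precisely so that the lower nonzero code reappears unscaled and thus qualifies as an honest binary code.

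It then remains to verify the two structural hypotheses on $\Lambda$. Since $\mathcal{C}_{L}$ is linear, $\mathcal{C}_{L}+2\mathbb{Z}^{n}$ is the Construction~A lattice $\Lambda_{A}(\mathcal{C}_{L})$ of Definition~\ref{constrA}, hence a genuine lattice, and $\Lambda=2^{m}\Lambda_{A}(\mathcal{C}_{L})$ is a lattice as well, being a positive scaling of one. For axis-symmetry I would note that any coordinate sign flip $x_{j}\mapsto -x_{j}$ preserves $\mathcal{C}_{L}+2\mathbb{Z}^{n}$: writing a coordinate as $c+2z$ with $c\in\{0,1\}$ and $z\in\mathbb{Z}$, one has $-c\equiv c\pmod 2$, so $-(c+2z)$ again lies in $c+2\mathbb{Z}$; and scaling by $2^{m}$ commutes with reflections, so $\Lambda$ is symmetric with respect to all coordinate axes. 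With both hypotheses confirmed, Theorem~\ref{generalgu} gives that $\Gamma'=\Lambda+\mathcal{C}$ is geometrically uniform, and undoing the scaling yields the geometric uniformity of $\Gamma_{C}$.

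I do not expect a serious obstacle here; the only point requiring genuine care is the rescaling step, whose sole purpose is to normalize the lower nonzero term $2^{i-1}\mathcal{C}_{i}$ into a binary code so that Theorem~\ref{generalgu} applies verbatim. Equivalently, one could re-run the proof of Theorem~\ref{generalgu} with the reflection isometries $T_{c_1}$ acting on the scaled code, using $T_{c_1}\bigl(2^{i-1}(c_2-c_1)\bigr)=2^{i-1}(c_1\oplus c_2)$, but the rescaling argument keeps the reasoning shorter and lets us invoke the theorem directly.
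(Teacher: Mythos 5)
Your proposal is correct and follows essentially the same route as the paper's proof: both factor out $2^{i-1}$ to write $\Gamma_{C}=2^{i-1}\bigl(\mathcal{C}_{i}+2^{L-i}(\mathcal{C}_{L}+2\mathbb{Z}^{n})\bigr)$, apply Theorem~\ref{generalgu} with the axis-symmetric Construction~A lattice $2^{L-i}(\mathcal{C}_{L}+2\mathbb{Z}^{n})$ and the binary code $\mathcal{C}_{i}$, and conclude via invariance of geometric uniformity under scaling. Your write-up is simply more explicit about the two points the paper leaves implicit (the conjugation argument for scaling invariance and the coordinate sign-flip symmetry of $\mathcal{C}_{L}+2\mathbb{Z}^{n}$).
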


\begin{proof} We can write $\Gamma_{C}=2^{i-1}(\mathcal{C}_{i} + 2^{L-i} (\mathcal{C}_{L} + 2\mathbb{Z}^{n})).$ Since the Construction A lattice $\mathcal{C}_{L} + 2\mathbb{Z}^{n}$ is axes-symmetric and so its scaling by $2^{L-i},$ it follows from Theorem \ref{generalgu} that $\mathcal{C}_{i} + 2^{L-i} (\mathcal{C}_{L-1} + 2\mathbb{Z}^{n}),$ $i=1, \dots, L-1,$ is geometrically uniform, and this fact also applies to the scaled version.
\end{proof}

\begin{remark}\label{remgu} The fact that a $2-$level Construction C is geometrically uniform (Section \ref{subsgu}) is a special case of Corollary \ref{corogu} for $L=2$ and $i=1.$
\end{remark}

	
\section{Construction $C^\star:$ an inter-level coded version of Construction C}

	In this section, we introduce a new method to construct constellations from binary codes, called Construction $C^\star,$ which generalizes the multilevel Construction C.
	
\begin{definition} \label{constrcstar} (\textit{Construction $C^{\star}$}) Let  $\mathcal{C}$ be a length$-nL$ binary code, $\mathcal{C} \subseteq \mathbb{F}_{2}^{nL},$ which we denote by main code. Then Construction $C^{\star}$ is a discrete subset of $\mathbb{R}^{n}$ defined as 
\begin{eqnarray}
 \Gamma_{C^{\star}}& := & \{c_{1}+2c_{2}+ \dots + 2^{L-1}c_{L}+2^{L}z: (c_{1}, c_{2}, \dots, c_{L}) \in \mathcal{C}, \nonumber \\
& & c_{i} \in \mathbb{F}_{2}^{n}, i=1, \dots, L, z \in \mathbb{Z}^{n}\}.  
\end{eqnarray}
\end{definition}

\begin{definition} (\textit{Projection codes}) \label{subcodes} Let $c=(c_1,c_2,...,c_L)$ be a partition of a codeword $c = (c_{11}, \dots, c_{1n},....,c_{L1}, \dots, c_{Ln}) \in \mathcal{C} \subseteq \mathbb{F}_2^n$ into length$-n$ subvectors  $c_i = (c_{i1},....,c_{in}),$  $i=1,\dots,L.$ Then, a projection code $\mathcal{C}_i$ consists of all subvectors $c_{i}$ that appear as we scan through all possible codewords $c \in \mathcal{C}.$ Note that if $\mathcal{C}$ is linear, then every projection code $\mathcal{C}_{i}, i=1, \dots, L$ is also linear.
\end{definition}

\begin{remark}\label{rem2} Construction $C^\star$ is a generalization of Construction C. Specifically, if the main code in $\mathbb{F}_{2}^{nL}$ is given as $\mathcal{C}=\mathcal{C}_{1} \times \mathcal{C}_{2} \times \dots \times \mathcal{C}_{L},$ then Construction $C^{\star}$ becomes Construction C, because the projection codes are independent.
\end{remark}

\begin{definition}(\textit{Associated Construction C}) \label{associated} Given a Construction $C^{\star}$ defined by a binary code $\mathcal{C} \subseteq \mathbb{F}_{2}^{nL},$ we call the associated Construction C the constellation defined as
\begin{equation}
\Gamma_{{C}}= \Gamma_{C} (\Gamma_{C^\star})=\mathcal{C}_{1} + 2 \mathcal{C}_{2} + \dots + 2^{L-1}\mathcal{C}_{L}+2\mathbb{Z}^{n},
\end{equation}
where $\mathcal{C}_{1}, \mathcal{C}_{2}, \dots, \mathcal{C}_{L} \subseteq \mathbb{F}_{2}^{n}$ are the projection codes of $\mathcal{C}$ (Definition \ref{subcodes}).
\end{definition}

\begin{remark} Construction $C^\star$ is always a subset of its associated Construction $C,$ i.e., $\Gamma_{C^\star} \subseteq \Gamma_{C},$ because in general the main code $\mathcal{C}$ restricts the possible combinations of the component codes (unless they are independent as in Remark \ref{rem2}).
\end{remark}

\begin{example}(A nonlattice Construction $C^\star$ and its associated Construction C) \label{ex1} Consider a linear binary code $\mathcal{C}$ with length $nL=4$ ($L=n=2$), where $\mathcal{C}=\{(0,0,0,0),$ $(1,0,0,1),(1,0,1,0),$ $(0,0,1,1)\} \subseteq \mathbb{F}_{2}^{4}.$ 
	Thus, any element $x(c,z) \in \Gamma_{C^{\star}}$ can be written as
\begin{equation}
x(c,z)=\begin{cases}
(0,0)+4z, & \mbox{if} \ {c}_{1}=(0,0) \ \mbox{and} \ {c}_{2}=(0,0) \\
(1,2)+4z, & \mbox{if} \ {c}_{1}=(1,0) \ \mbox{and} \ {c}_{2}=(0,1) \\
(3,0)+4z, & \mbox{if} \ {c}_{1}=(1,0) \ \mbox{and} \ {c}_{2}=(1,0) \\
(2,2)+4z, & \mbox{if} \ {c}_{1}=(0,0) \ \mbox{and} \ {c}_{2}=(1,1), \\
\end{cases}
\end{equation}	
where $(c_1, c_2) \in \mathcal{C}$ and $z \in \mathbb{Z}^{2}.$ Geometrically, the resulting constellation is given by the blue circles represented in Figure \ref{consc}. We can notice that $\Gamma_{C^{\star}}$ is not a lattice. However, if we consider the associated Construction C with codes $\mathcal{C}_{1}=\{(0,0),(1,0)\}$ and $\mathcal{C}_{2}=\{(0,0),(1,1),(0,1),(1,0)\},$ we have a lattice (pink points in Figure \ref{consc}), because $\mathcal{C}_{1}$ and $\mathcal{C}_{2}$ satisfy the condition given by Theorem \ref{thmko}.

\begin{figure}[H]
\begin{center}
		\includegraphics[height=5.1cm]{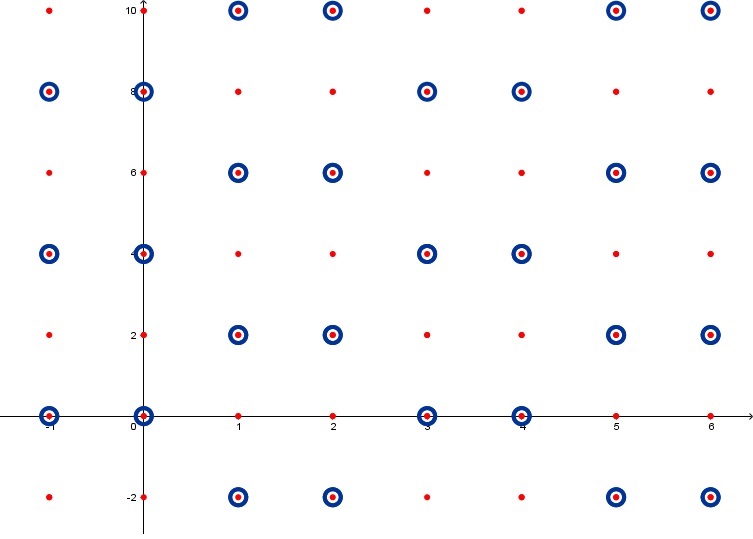}  
\caption{{(Nonlattice) Construction $C^{\star}$ constellation in blue and its associated (lattice) Construction C constellation in red.}}
 \label{consc}
\end{center}
\end{figure} 	
\end{example}			

	The next example presents a case where both Construction $C^\star$ and its associated Construction $C$ are lattices, but they are not equal.
	
\begin{example}(Lattice Constructions $C^\star$ and associated C) \label{ceconjecture} Let a linear binary code $\mathcal{C}=\{(0,0,0,0),$ $(0,0,1,0),(1,0,0,1),(1,0,1,1)\} \subseteq \mathbb{F}_{2}^{4}$ ($nL=4,$ $L=n=2$), so the projection codes are $\mathcal{C}_{1}=\{(0,0),(1,0)\}$ and $\mathcal{C}_{2}=\{(0,0),(1,0),$ $(0,1),(1,1)\}.$ An element $x(c,z) \in \Gamma_{C^{\star}}$ can be described as 
\begin{equation}
x(c,z)=\begin{cases}
(0,0)+4z, & \mbox{if} \ {c}_{1}=(0,0) \ \mbox{and} \ {c}_{2}=(0,0) \\
(1,2)+4z, & \mbox{if} \ {c}_{1}=(1,0) \ \mbox{and} \ {c}_{2}=(0,1) \\
(2,0)+4z, & \mbox{if} \ {c}_{1}=(0,0) \ \mbox{and} \ {c}_{2}=(1,0) \\
(3,2)+4z, & \mbox{if} \ {c}_{1}=(1,0) \ \mbox{and} \ {c}_{2}=(1,1), \\
\end{cases}
\end{equation}
with $z \in \mathbb{Z}^{2}.$ This construction is represented by black circles in Figure \ref{fig10}. Note that $\Gamma_{C^{\star}}$ is a lattice and $\mathcal{C} \neq \mathcal{C}_{1} \times \mathcal{C}_{2},$ which implies that $\Gamma_{C^{\star}} \neq \Gamma_{C}.$ Nevertheless, the associated Construction C is also a lattice (Figure \ref{fig10}).

	One can clearly observe the advantage of $\Gamma_{C^\star}$ over the associated $\Gamma_{{C}}$ in this case, because the packing densities are, respectively $\Delta_{\Gamma_{C^\star}} = \frac{\Pi}{4} \approx 0.7853$ and  $\Delta_{\Gamma_{{C}}} = \frac{\Pi}{8} \approx 0.3926.$ 

\begin{figure}[H]
\begin{center}
		\includegraphics[height=5.1cm]{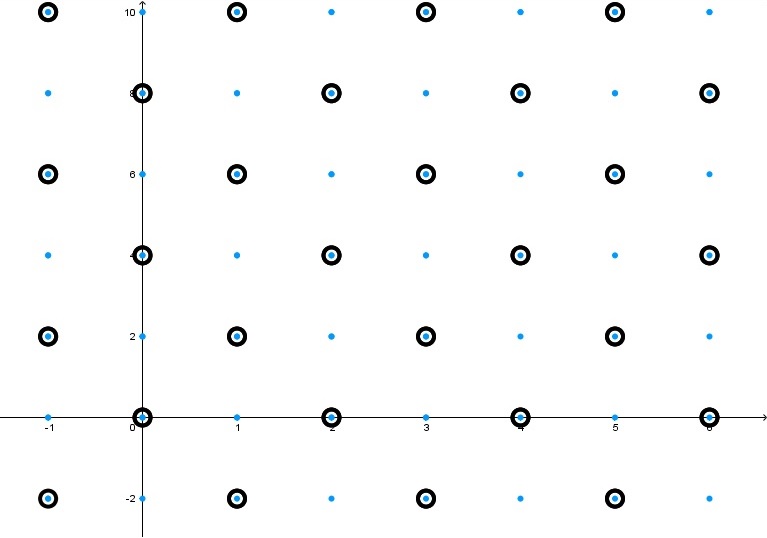}  
\caption{{(Lattice) Construction $C^{\star}$ constellation in black and its associated (lattice) Construction C constellation in blue.}}
 \label{fig10}
\end{center}
\end{figure} 	
\end{example}

	We can also describe the densest lattice in dimension $24$, the Leech lattice $\Lambda_{24},$ as a Construction $C^{\star}$ with $3$ levels. 

\begin{example} (Leech lattice from Construction $C^\star$) \label{exleech} Based on the construction given by Conway and Sloane \cite{conwaysloane} (pp. 131-132) and Amrani et al \cite{amrani}, we start by considering the following three linear binary codes:
\vspace{0.1cm}
\begin{itemize}
\item $\mathcal{C}_{1}=\{(0,\dots, 0), (1, \dots, 1)\} \subseteq \mathbb{F}_{2}^{24};$
\vspace{0.2cm}
\item $\mathcal{C}_{2}$ as a Golay code $\mathcal{C}_{24} \subset \mathbb{F}_{2}^{24}$ achieved by adding a parity bit to the original $[23,12,7]-$binary Golay code $\mathcal{C}_{23},$ which consists in a quadratic residue code of length 23;
\vspace{0.2cm}
\item $\mathcal{C}_{3} = \tilde{\mathcal{C}}_{3} \cup \overline{\mathcal{C}}_{3} = \mathbb{F}_{2}^{24},$ where $\tilde{\mathcal{C}}_{3}=\{(x_{1}, \dots, x_{24}) \in \mathbb{F}_{2}^{24}:  \sum_{i=1}^{24} x_{1} \equiv 0 \mod 2\}$ and $\overline{\mathcal{C}}_{3}=\{(y_{1}, \dots, y_{24}) \in \mathbb{F}_{2}^{24}: \sum_{i=1}^{24} y_{1} \equiv 1 \mod 2\}.$
\end{itemize}

	Observe that $\mathcal{C}_{1}, \mathcal{C}_{2}$ and $\mathcal{C}_{3}$ are linear codes. Consider a code $\mathcal{C} \subseteq \mathbb{F}_{2}^{72}$ whose codewords are described in one of two possible ways:
\begin{eqnarray}
\mathcal{C}=\{ (\underbrace{0,\dots, 0}_{\in \mathcal{C}_1}, \underbrace{a_{1}, \dots, a_{24}}_{\in \mathcal{C}_{2}}, \underbrace{x_{1}, \dots, x_{24}}_{\in \tilde{\mathcal{C}}_{3}}), (\underbrace{1, \dots, 1}_{\in \mathcal{C}_1}, \underbrace{a_{1}, \dots, a_{24}}_{\in \mathcal{C}_{2}}, \underbrace{y_{1}, \dots, y_{24}}_{\in \overline{\mathcal{C}}_{3}})\}.
\end{eqnarray}

	Thus, we can define the Leech lattice $\Lambda_{24}$ as the $3-$level Construction $C^{\star}$ given by
\begin{equation}
\Lambda_{24}=\Gamma_{C^{\star}}=\{c_{1}+2c_{2}+4c_{3}+8z: (c_{1}, c_{2}, c_{3}) \in \mathcal{C}, z \in \mathbb{Z}^{24}\}.
\end{equation}
Observe that $\Gamma_{C^{\star}} \neq \Gamma_{C}$ and
in this case, the associated Construction C has packing density $\Delta_{\Gamma_{C}} \approx 0.00012 < 0.001929 \approx \Delta_{\Gamma_{C^\star}},$ which is the packing density of $\Lambda_{24},$ the best known packing density in dimension $24$ even for nonlattices \cite{cohnetal}. 
\end{example}

\section{Geometric uniformity and latticeness of Construction $\mathcal{C}^\star$} \label{gucstar}

	We verified previously (Section \ref{subsgu} and Remark \ref{remgu}) that a $2-$level Construction C, $\Gamma_{C}=\mathcal{C}_{1}+2\mathcal{C}_{2}+\mathbb{Z}^{n},$
where $\mathcal{C}_{1}, \mathcal{C}_{2} \subseteq \mathbb{F}_{2}^{n}$ are linear codes, is always geometrically uniform even when it is not a lattice. Another question that emerges is: is a $2-$level Construction $C^{\star}$ also geometrically uniform? As we show in a sequel, the answer is affirmative.

\begin{theorem} \label{gucstar} (Geometric uniformity of a $2-$level Construction $C^{\star}$) For any binary linear code $\mathcal{C} \subseteq \mathbb{F}_{2}^{2n},$ the $2-$level Construction $C^\star$ defined as $\Gamma_{C^{\star}} = \{c_{1}+2c_{2}+4z: (c_{1}, c_{2}) \in \mathcal{C}, z \in \mathbb{Z}^{n}\}$ is geometrically uniform.
\end{theorem}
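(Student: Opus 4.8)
The plan is to mimic the proof of Theorem~\ref{generalgu}: it suffices to produce, for every $x \in \Gamma_{C^\star}$, an isometry $F_x$ of $\mathbb{R}^n$ satisfying $F_x(x)=0$ and $F_x(\Gamma_{C^\star})=\Gamma_{C^\star}$, since then $F_w^{-1}\circ F_x$ carries any $x$ to any $w$ while fixing the constellation. Writing $x = c_1 + 2c_2 + 4z_1$ with $(c_1,c_2)\in\mathcal{C}$ and $z_1\in\mathbb{Z}^n$, I would take the same reflection $F_x(y)=T_{c_1}(y-x)$, where $T_{c_1}$ is the diagonal $\pm1$ isometry of \eqref{eqT} associated with the first-level word $c_1$. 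This is clearly an isometry with $F_x(x)=0$ and $T_{c_1}^{-1}=T_{c_1}$, so the whole content is to check that $F_x$ preserves $\Gamma_{C^\star}$.

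The convenient reformulation is that $\Gamma_{C^\star} = \{v\in\mathbb{Z}^n : v \bmod 4 \in \mu(\mathcal{C})\}$, where $\mu(c_1,c_2)=c_1+2c_2$ is the componentwise bijective natural labeling, since every $c_1+2c_2$ already lies in $\{0,1,2,3\}^n$. Take $y=c_1'+2c_2'+4z_2\in\Gamma_{C^\star}$. As $T_{c_1}$ is linear and integer-preserving, $F_x(y)\in\mathbb{Z}^n$, and I would reduce $F_x(y)=T_{c_1}(c_1'-c_1)+2T_{c_1}(c_2'-c_2)+4T_{c_1}(z_2-z_1)$ modulo $4$. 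The first term equals $c_1\oplus c_1'$ exactly, via the identity $T_{c_1}(c_1'-c_1)=c_1\oplus c_1'$ already established in Theorem~\ref{generalgu}, and the last term vanishes mod $4$.

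The crux, and the step I expect to be the main obstacle, is the middle term. Here $T_{c_1}$ carries the sign $(-1)^{c_{1i}}$ of the \emph{first} level, not of $c_2$, so $T_{c_1}(c_2'-c_2)$ is not simply $c_2\oplus c_2'$ but some vector in $\{-1,0,1\}^n$. The point to exploit is that it is multiplied by $2$: for any integer $m$, the value $2m \bmod 4$ depends only on $m \bmod 2$, and $(-1)^{c_{1i}}(c_{2i}'-c_{2i})\equiv c_{2i}\oplus c_{2i}' \pmod 2$, so the first-level signs wash out and $2T_{c_1}(c_2'-c_2)\equiv 2(c_2\oplus c_2')\pmod 4$. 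Hence $F_x(y)\equiv (c_1\oplus c_1') + 2(c_2\oplus c_2') = \mu(c_1\oplus c_1',\,c_2\oplus c_2') \pmod 4$, and $(c_1\oplus c_1',c_2\oplus c_2')\in\mathcal{C}$ by linearity of the main code, so $F_x(y)\in\Gamma_{C^\star}$ and therefore $F_x(\Gamma_{C^\star})\subseteq\Gamma_{C^\star}$.

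Finally, for surjectivity I would run the identical reduction on $F_x^{-1}(w)=x+T_{c_1}(w)$: writing $w=c_1''+2c_2''+4z_3$, the same mod-$4$ computation gives $F_x^{-1}(w)\equiv\mu(c_1\oplus c_1'',c_2\oplus c_2'')\pmod 4$, which again lies in $\mu(\mathcal{C})$, so $F_x^{-1}$ also maps $\Gamma_{C^\star}$ into itself and thus $F_x(\Gamma_{C^\star})=\Gamma_{C^\star}$. Composing $F_w^{-1}\circ F_x$ then yields the required distance-preserving symmetry sending $x$ to $w$, establishing geometric uniformity. The essential new ingredient relative to Corollary~\ref{corogu} is precisely the observation that the inter-level coupling of the main code survives: even though $T_{c_1}$ is built only from $c_1$, the second level reduces mod $4$ to $c_2\oplus c_2'$, so the relevant pair stays inside the linear code $\mathcal{C}$ rather than merely inside the product of its projection codes.
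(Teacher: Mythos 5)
Your proposal is correct and follows essentially the same route as the paper's proof: the paper likewise applies the reflection $F_x(y)=T_{c_1}(y-x)$ from Theorem \ref{generalgu}, asserts that $T_{c_1}(y-x)=((\tilde{c}_1-c_1)\bmod 2)+2((\tilde{c}_2-c_2)\bmod 2)+4z'$ for a suitably chosen $z'\in\mathbb{Z}^n$, concludes membership in $\Gamma_{C^\star}$ from linearity of the main code $\mathcal{C}$, and defers the surjectivity of $F_x$ to the arguments of Theorem \ref{generalgu}. Your explicit mod-$4$ computation --- in particular the observation that the first-level signs in $2T_{c_1}(c_2'-c_2)$ wash out because $2m \bmod 4$ depends only on $m \bmod 2$ --- merely spells out the details the paper compresses into the phrase ``suitably chosen,'' and your treatment of $F_x^{-1}$ makes explicit the same inverse argument the paper cites by reference.
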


\begin{proof} Assuming the same isometry $T_{c_1}$ given by Equation \eqref{eqT}, for a fixed $x=c_1+2c_2+4z \in \Gamma_{C^\star},$ with $(c_1, c_2) \in \mathcal{C}$ and given $y=\tilde{c}_1+2\tilde{c}_2+4\tilde{z} \in \Gamma_{C^\star},$ $(\tilde{c}_1, \tilde{c}_2) \in \mathcal{C},$ it is true that $T_{c_1}(y-x)=((\tilde{c}_1-{c}_1)  \bmod 2) + 2 ((\tilde{c}_2-{c}_2)  \bmod 2)+4z',$ where $z' \in \mathbb{Z}$ is suitably chosen according to the value of the difference in each coordinate. Clearly, by the linearity of the code, $((\tilde{c}_1-{c}_1)  \bmod 2, (\tilde{c}_2-{c}_2)  \bmod 2) \in \mathcal{C},$ then $T_{c_1}(y-x) \in \Gamma_{C^\star}.$ To prove that  $T_{c_1}(\Gamma_{C^\star} -x) = \Gamma_{C^\star},$ we use arguments as the ones in Theorem \ref{generalgu}. Therefore, we can conclude that for $L=2,$ $\Gamma_{C^\star}$ is geometrically uniform.

\end{proof}

	As we have seen in Example \ref{cex2}, Construction C is not geometrically uniform for general $L \geq 3.$ The same holds for Construction $C^\star$, because if we consider $\mathcal{C}$ as the Cartesian product $\mathcal{C}= \mathcal{C}_{1} \times \mathcal{C}_{2} \times  \dots \times \mathcal{C}_{L},$ then $\Gamma_{C}= \Gamma_{C^\star}$ and therefore, it is also not geometrically uniform in general for $L \geq 3.$
	
	The work of Kositwattanarerk and Oggier \cite{kositoggier} motivated our search for the latticeness of Construction $C^\star.$ Note that in \cite{kositoggier}, the requirements for Construction C to be a lattice are imposed by a comparison between Constructions C and D. This is not possible for Construction $C^\star,$ which requires a different strategy. In the upcoming discussion, we present a simple sufficient condition (Theorem \ref{coro4}) followed by a necessary and sufficient condition (Theorem \ref{thmcomplete}) for $\Gamma_{C^{\star}}$ to be a lattice.


	We define now the antiprojection set, which is an important notion to state the lattice conditions of Construction $C^\star.$

\begin{definition} (\textit{Antiprojection}) The antiprojection (inverse image of a projection) $\mathcal
{S}_{i}(c_1,\dots, c_{i-1},$ $c_{i+1}, \dots, c_{L})$ consists of all vectors $c_{i} \in \mathcal{C}_{i},$ $i=1,\dots, L$ that appear as we scan through all possible codewords $c \in \mathcal{C},$ while keeping $c_{1}, \dots, c_{i-1}, c_{i+1}, \dots, c_{L}$ fixed:
\begin{eqnarray}
&\mathcal{S}_{i}(c_1,...,c_{i-1}, c_{i+1},...,c_{L})=\{c_{i} \in \mathcal{C}_{i}: (c_{1}, \dots, \underbrace{c_{i}}_{\text{\makebox[0pt]{i-th posititon} }}, \dots, c_{L}) \in \mathcal{C}\}.
\end{eqnarray}
\end{definition}

\begin{example}(Antiprojection set) In Example \ref{ceconjecture}, we can define the antiprojection $\mathcal{S}_{2}(c_{1})= \{c_{2} \in \mathcal{C}_{2}: (c_{1},c_{2}) \in \mathcal{C}\}.$ For $c_{1}=(0,0) \in \mathcal{C}_{1}$ we have $\mathcal{S}_{2}(c_{1})=\{(0,0),(1,0)\}$ and for $c_{1}=(1,0) \in \mathcal{C}_{1},$ $\mathcal{S}_{2}(c_{1})=\{(0,1),(1,1)\}.$ 
\end{example}

	We now can state a sufficient condition for the latticeness of Construction $C^\star.$

\begin{theorem}  (\textit{A sufficient lattice condition for $\Gamma_{C^\star}$})  \label{coro4} If $\mathcal{C} \subseteq \mathbb{F}_{2}^{nL}$ is a linear binary code with projection codes $\mathcal{C}_{1},\mathcal{C}_{2}, \dots, \mathcal{C}_{L}$ such that $\mathcal{C}_{1} \subseteq \mathcal{S}_{2}(0,\dots,0) \subseteq \mathcal{C}_{2} \subseteq \dots \subseteq \mathcal{C}_{L-1} \subseteq \mathcal{S}_{L}(0, \dots,0) \subseteq \mathcal{C}_{L} \subseteq \mathbb{F}_{2}^{n}$ and where $\mathcal{C}_{i-1} \subseteq \mathcal{S}_{i}(0,\dots,0)$ is closed under the Schur product for $i=2,\dots, L,$ then $\Gamma_{C^{\star}}$ is a lattice.
\end{theorem}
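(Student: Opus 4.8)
The plan is to show that $\Gamma_{C^\star}$ is a \emph{discrete additive subgroup} of $\mathbb{R}^n$. Discreteness is immediate since $\Gamma_{C^\star}\subseteq\mathbb{Z}^n$, and evaluating the construction on the all-zero codeword (which lies in $\mathcal{C}$ by linearity) shows $2^L\mathbb{Z}^n\subseteq\Gamma_{C^\star}$, so $\Gamma_{C^\star}$ is a finite union of cosets of $2^L\mathbb{Z}^n$. Consequently it suffices to prove closure under ordinary addition: the image of $\Gamma_{C^\star}$ in the finite group $(\mathbb{Z}/2^L\mathbb{Z})^n$ is then a finite submonoid containing the identity, hence a subgroup, and closure under negation follows automatically. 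This reduces the whole theorem to a single closure statement.

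Next I would take $a=\sum_{i=1}^{L}2^{i-1}c_i+2^Lz$ and $b=\sum_{i=1}^{L}2^{i-1}c_i'+2^Lz'$ with $(c_1,\dots,c_L),(c_1',\dots,c_L')\in\mathcal{C}$, and expand $a+b$ level by level using the carry identity \eqref{sum}, namely $c_i+c_i'=c_i\oplus c_i'+2(c_i\ast c_i')$. This splits $a+b$ into a \emph{digit part} $\sum_{i=1}^{L}2^{i-1}(c_i\oplus c_i')$, which is a genuine $\Gamma_{C^\star}$-element because $(c_1\oplus c_1',\dots,c_L\oplus c_L')\in\mathcal{C}$ by linearity, plus a \emph{carry part} $\sum_{i=1}^{L-1}2^{i}(c_i\ast c_i')$, the top carry $2^L(c_L\ast c_L')$ being absorbed into $2^L\mathbb{Z}^n$. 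The crucial observation is that, since $c_i,c_i'\in\mathcal{C}_i$, the Schur-closure hypothesis applied at index $i+1$ gives $c_i\ast c_i'\in\mathcal{S}_{i+1}(0,\dots,0)$; equivalently, the zero-padded vector carrying $c_i\ast c_i'$ in position $i+1$ is itself a codeword of $\mathcal{C}$.

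The heart of the argument is then a carry-absorption step, which I would phrase as a small induction: given a canonical element $\sum_{i}2^{i-1}d_i+2^Lm$ with $(d_1,\dots,d_L)\in\mathcal{C}$, adding a single carry $2^j e$ with $e\in\mathcal{S}_{j+1}(0,\dots,0)$ again yields an element of $\Gamma_{C^\star}$. Indeed, at level $j+1$ one replaces the digit $d_{j+1}$ by $d_{j+1}\oplus e$ — still a codeword of $\mathcal{C}$, since the zero-padded $e$ is a codeword and $\mathcal{C}$ is linear — while \eqref{sum} produces a new carry $2^{j+1}(d_{j+1}\ast e)$. Here the inclusion $\mathcal{S}_{j+1}(0,\dots,0)\subseteq\mathcal{C}_{j+1}$ guarantees $e\in\mathcal{C}_{j+1}$, so that $d_{j+1},e\in\mathcal{C}_{j+1}$ and the Schur-closure hypothesis at index $j+2$ places the new carry in $\mathcal{S}_{j+2}(0,\dots,0)$, allowing the process to repeat one level higher; after level $L$ the residual carry falls into $2^L\mathbb{Z}^n$. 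Applying this step to the carries $2^i(c_i\ast c_i')$, $i=1,\dots,L-1$, one at a time keeps $a+b$ inside $\Gamma_{C^\star}$ throughout, completing the closure proof.

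The step I expect to be the main obstacle is precisely this carry propagation: one must check that \emph{every} carry generated in the course of absorption is again a Schur product of two vectors that the hypotheses control, so that it keeps landing in the appropriate antiprojection set rather than escaping the constellation. This is exactly where the full chain $\mathcal{C}_1\subseteq\mathcal{S}_2(0,\dots,0)\subseteq\mathcal{C}_2\subseteq\dots\subseteq\mathcal{S}_L(0,\dots,0)\subseteq\mathcal{C}_L$ is used in both directions — the inclusions $\mathcal{S}_i(0,\dots,0)\subseteq\mathcal{C}_i$ to re-enter the projection codes and the per-level Schur-closure to bound the freshly created carry — so care is needed to keep this bookkeeping consistent across all $L$ levels.
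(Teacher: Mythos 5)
Your proof is correct, but it takes a genuinely different route from the paper's. The paper first proves a standalone result (Lemma \ref{lemmasum}) giving a closed-form expression for $x+y$ in which each total carry $s_i$ is written explicitly as an $\oplus$-sum of Schur-product terms $(c_i\ast\tilde c_i), r_i^{(1)},\dots,r_i^{(i-1)}$, established by induction on the number of levels $L$; the proof of Theorem \ref{coro4} then verifies that each $s_{i-1}$ lies in $\mathcal{S}_i(0,\dots,0)$, so that the digit tuple of $x+y$ is the $\oplus$-sum of the codeword $(c_1\oplus\tilde c_1,\dots,c_L\oplus\tilde c_L)$ with the zero-padded carry codewords $(0,\dots,s_{i-1},\dots,0)$. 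You never compute the total carries: you absorb the first-generation carries $2^i(c_i\ast c_i')$ one at a time, and your induction guarantees that every carry ever created is a \emph{single} Schur product of two vectors lying in the same projection code $\mathcal{C}_{j+1}$ (a digit of the current canonical element, and an incoming carry from $\mathcal{S}_{j+1}(0,\dots,0)\subseteq\mathcal{C}_{j+1}$), so the Schur-closure hypothesis at index $j+2$ applies verbatim. This buys transparency: the paper's argument must assert that the composite terms $r_{i-1}^{(j)}$ (products of carries with carries) also land in $\mathcal{S}_i(0,\dots,0)$, which takes an extra step beyond the literal hypothesis, whereas in your sequential scheme no product of two carries ever arises. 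Your handling of negation --- the image of $\Gamma_{C^\star}$ in $(\mathbb{Z}/2^L\mathbb{Z})^n$ is a finite submonoid containing the identity, hence a subgroup --- is also tidier than the paper's, which in Theorem \ref{coro4} merely asserts $-x\in\Gamma_{C^\star}$ and only in Theorem \ref{thmcomplete} spells out the equivalent repeated-addition argument. What the paper's route buys in exchange is reusability: the explicit carry vectors $s_i$ of Lemma \ref{lemmasum} are precisely what define the set $\mathcal{S}$ in the necessary-and-sufficient latticeness condition of Theorem \ref{thmcomplete}, so the closed-form formula is an investment that pays off beyond this theorem, while your localized absorption argument would not directly furnish it.
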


	The proof will be presented in what follows after the statement of an auxiliary result. One can observe that the mathematical intuition behind the conditions for $\Gamma_{C^{\star}}$ to be a lattice lies in the fact that, since $a + b = a \oplus b + 2 (a \ast b)$  for $a,b \in \mathbb{F}_{2}^{n},$ when adding two points in $\Gamma_C$ or $\Gamma_{C^\star},$ each level $i \geq 2$ has the form of $c_{i} \oplus \tilde{c}_i \oplus carry_{(i-1)},$ where $carry_{(i-1)}$ is the ``carry" term vector of the $2(a \ast b)$ coordinates from the addition in the lower level. Since the projection code $\mathcal{C}_i$ is linear, $c_i \oplus \tilde{c}_i$ is a codeword in the $i-$th level. Hence, closeness of $\Gamma_{C^\star}$ under addition amounts to the fact that $carry_{(i-1)}$ is also a codeword in $\mathcal{C}_i,$ for $i=2,3, \dots,L.$ The next result presents a detailed description of these ``carry" terms. 

\begin{lemma} (\textit{Sum in $\Gamma_{C^{\star}}$}) \label{lemmasum} Let $\mathcal{C} \subseteq \mathbb{F}_{2}^{nL}$ be a binary linear code. If $x,y \in \Gamma_{C^{\star}}$ are such that
\begin{eqnarray}
x&=&c_{1}+2c_{2}+\dots+2^{L-1}c_{L}+2^{L}z \label{eqx} \\
y&=&\tilde{c}_{1}+2\tilde{c}_{2}+\dots+2^{L-1}\tilde{c}_{L}+2^{L}\tilde{z}, \label{eqy}
\end{eqnarray}
with $(c_{1}, c_{2}, \dots, c_{L}), (\tilde{c}_{1}, \tilde{c}_{2}, \dots, \tilde{c}_{L}) \in \mathcal{C}$ and $z, \tilde{z} \in \mathbb{Z}^{n},$ then
{\begin{align} \label{formulasum}
x+y &= c_{1}\oplus \tilde{c}_{1} + 2(s_{1} \oplus (c_{2}\oplus \tilde{c}_{2})) + \dots +2^{L-1}(s_{L-1} \oplus (c_{L}\oplus \tilde{c}_{L}))+ 2^{L}(s^{\star}_{L}+z+\tilde{z}),
\end{align}}
where $s_i \in \mathbb{F}_{2}^{n}$ is the ``carry" from level $i$ to level $i+1,$ given by 
\begin{eqnarray} \label{si}
& s_{i}= (c_{i} \ast \tilde{c}_{i}) \oplus r_{i}^{(1)} \oplus r_{i}^{(2)} \oplus \dots \oplus r_{i}^{(i-1)} = (c_{i} \ast \tilde{c}_{i}) \bigoplus\limits_{j=1}^{i-1} r_{i}^{(j)}, \nonumber \\
& r_{i}^{(1)}=(c_{i} \oplus \tilde{c}_{i}) \ast (c_{i-1} \ast \tilde{c}_{i-1}), \ \ r_{i}^{(j)}=r_{i}^{(j-1)} \ast r_{i-1}^{(j-1)}, \nonumber \\
& 2 \leq j \leq i-1, i=2, \dots, L-1
\end{eqnarray}
$s_{0}=(0, \dots, 0),$ $s_1=c_{1} \ast \tilde{c}_{1}$ and the formula for $s^{\star}_{i}$ is the same for $s_{i}$ but with real sum instead of modulo-2 sum. 
\end{lemma}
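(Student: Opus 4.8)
The plan is to exploit the fact that every operation appearing in the statement — real addition, the modulo-$2$ sum $\oplus$, and the Schur product $\ast$ — acts coordinatewise. Consequently it suffices to prove the identity one coordinate at a time, i.e.\ to establish the scalar version in which each $c_i,\tilde c_i$ is a single bit in $\{0,1\}$, $z,\tilde z\in\mathbb{Z}$, and $x,y$ are ordinary integers written in the mixed form $x=\sum_{i=1}^{L}2^{i-1}c_i+2^{L}z$. Reassembling the $n$ coordinates at the end recovers the vector identity verbatim, since $s_i$, the $r_i^{(j)}$, and the sum bits are all built from coordinatewise $\oplus$ and $\ast$. This reduction turns the lemma into a statement about ordinary binary ripple-carry addition of two integers, where the coefficient of $2^{i-1}$ must be forced back into $\{0,1\}$ for $i\le L$, while the coefficient of $2^{L}$ is left free and merges with $z+\tilde z$.

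First I would add $x$ and $y$ termwise, giving $x+y=\sum_{i=1}^{L}2^{i-1}(c_i+\tilde c_i)+2^{L}(z+\tilde z)$, and then sweep through the levels $i=1,2,\dots,L$ applying the half-adder identity \eqref{sum}, $a+b=(a\oplus b)+2(a\ast b)$, in the three-input form that also absorbs the incoming carry. Writing $s_{i-1}$ for the carry entering level $i$ (with $s_{0}=0$), the integer $c_i+\tilde c_i+s_{i-1}$ splits as a sum bit $c_i\oplus\tilde c_i\oplus s_{i-1}$ plus twice a carry-out $s_i=\mathrm{maj}(c_i,\tilde c_i,s_{i-1})$. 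The sum bit is exactly the claimed coefficient $s_{i-1}\oplus(c_i\oplus\tilde c_i)$ of $2^{i-1}$, and the key algebraic fact I would use for the carry is $\mathrm{maj}(a,b,c)=(a\ast b)\oplus\bigl((a\oplus b)\ast c\bigr)$, whose two summands are disjoint and therefore combine correctly under $\oplus$. At the very top no reduction into $\{0,1\}$ is imposed, so the accumulated carry is kept as a genuine integer sum; this is precisely why $s^{\star}_{L}$ is written with real $+$ rather than $\oplus$ in \eqref{formulasum}.

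The substantive step, and the one I expect to be the main obstacle, is obtaining the closed form of the carry by induction on $i$. Substituting the inductive hypothesis $s_{i-1}=(c_{i-1}\ast\tilde c_{i-1})\oplus\bigoplus_{j=1}^{i-2}r_{i-1}^{(j)}$ into $s_i=(c_i\ast\tilde c_i)\oplus\bigl((c_i\oplus\tilde c_i)\ast s_{i-1}\bigr)$ and distributing the Schur product over the $\oplus$-sum, using $(a\oplus b)\ast c=(a\ast c)\oplus(b\ast c)$ and idempotency $a\ast a=a$, produces the leading term $r_i^{(1)}=(c_i\oplus\tilde c_i)\ast(c_{i-1}\ast\tilde c_{i-1})$ together with a shifted family satisfying $r_i^{(j)}=(c_i\oplus\tilde c_i)\ast r_{i-1}^{(j-1)}$ for $2\le j\le i-1$, which is the recursive family recorded in \eqref{si}. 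Verifying that every index lines up, and that no term silently survives or vanishes through the annihilation $(a\ast b)\ast(a\oplus b)=0$, is the delicate bookkeeping at the heart of the argument. Once the carry formula holds for all $i$, reading off the coefficients of $2^{0},\dots,2^{L-1}$ yields the sum bits and the coefficient of $2^{L}$ yields $s^{\star}_{L}+z+\tilde z$, which is exactly \eqref{formulasum}, completing the proof.
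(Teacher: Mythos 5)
Your overall strategy is essentially the one the paper uses: the paper proves the lemma by induction on the number of levels, peeling off the top level and converting the real sum $s^{\star}_{k-1}+c_{k}+\tilde{c}_{k}$ into a sum bit plus a carry, which is exactly your bottom-up ripple-carry sweep organized in the opposite direction. Your coordinatewise reduction, the full-adder split $c_i+\tilde c_i+s_{i-1}=(c_i\oplus\tilde c_i\oplus s_{i-1})+2\,\mathrm{maj}(c_i,\tilde c_i,s_{i-1})$, and the identity $\mathrm{maj}(a,b,c)=(a\ast b)\oplus\bigl((a\oplus b)\ast c\bigr)$ are all correct, and in fact more explicit than the paper, which compresses this step into the phrase ``by doing all the decompositions.''

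The gap is in your final identification. The recursion you derive, $r_i^{(j)}=(c_i\oplus\tilde c_i)\ast r_{i-1}^{(j-1)}$, is \emph{not} ``the recursive family recorded in \eqref{si}'': the lemma defines $r_i^{(j)}=r_i^{(j-1)}\ast r_{i-1}^{(j-1)}$, and the difference is not cosmetic. Since $(c_{i-1}\ast\tilde c_{i-1})\ast(c_{i-1}\oplus\tilde c_{i-1})=0$ coordinatewise, the printed recursion yields $r_i^{(2)}=\bigl((c_i\oplus\tilde c_i)\ast(c_{i-1}\ast\tilde c_{i-1})\bigr)\ast\bigl((c_{i-1}\oplus\tilde c_{i-1})\ast(c_{i-2}\ast\tilde c_{i-2})\bigr)=0$ identically, hence $r_i^{(j)}\equiv 0$ for all $j\geq 2$, so carries could never propagate across more than one level. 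A concrete check: take $n=1$, $L=3$, $c=(1,1,1)$, $\tilde c=(1,0,0)$ (both lie in the linear code they span), $z=\tilde z=0$; then $x+y=7+1=8$, while \eqref{formulasum} with the printed carries evaluates to $0+2\cdot 0+4\cdot 0+8\cdot 0=0$. Your recursion instead gives $r_3^{(2)}=(c_3\oplus\tilde c_3)\ast r_2^{(1)}=1$, hence $s_3^{\star}=1$ and the correct total $8$. In other words, your derivation is right and what it actually reveals is that \eqref{si} as printed is erroneous; your only misstep is asserting, without verification, that your family coincides with the printed one --- precisely the ``delicate bookkeeping'' you yourself flagged as the heart of the argument. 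A complete write-up should either prove the corrected recursion and say so, or note explicitly that the lemma holds only after replacing $r_i^{(j)}=r_i^{(j-1)}\ast r_{i-1}^{(j-1)}$ by $r_i^{(j)}=(c_i\oplus\tilde c_i)\ast r_{i-1}^{(j-1)}$.
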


\begin{proof} Through induction in the number $L$ of levels:

\noindent \underline{Base case:} For $L=1$ level, $\mathcal{C} \subseteq \mathbb{F}_{2}^{n}$ has only one subcode $\mathcal{C}_{1}.$ Consider $x, y \in \Gamma_{C^\star}$ such that $x=c_{1}+2z$ and $y=\tilde{c}_{1}+2\tilde{z}.$ Then
\begin{eqnarray}
x+y &=& c_{1} + \tilde{c}_{1} + 2(z+ \tilde{z}) = c_{1} \oplus \tilde{c}_{1} + 2(\underbrace{c_{1} \ast \tilde{c}_{1}}_{s_{1} \in \mathbb{Z}^{n}} + z+ \tilde{z})
\end{eqnarray}
and the result is valid.\\

\noindent \underline{Induction step:} Assume that the formula in Equation (\ref{formulasum}) is valid for $L=k-1,$ where the main code $\tilde{\mathcal{C}} \in \mathbb{F}_{2}^{n(k-1)}$ has subcodes $\mathcal{C}_{1}, \dots, \mathcal{C}_{k-1} \in \mathbb{F}_{2}^{n}.$ Therefore, our induction hypothesis affirms that for $x,y \in \Gamma_{C^\star}$ such that
\begin{eqnarray}
x&=&c_{1}+2c_{2}+\dots+2^{k-2}c_{k-1}+2^{k-1}z \label{eqx} \\
y&=&\tilde{c}_{1}+2\tilde{c}_{2}+\dots+2^{k-2}\tilde{c}_{k-1}+2^{k-1}\tilde{z}, \label{eqy}
\end{eqnarray}
with $z, \tilde{z} \in \mathbb{Z}^{n},$ is true that  
\begin{eqnarray}
x+y &=& c_{1}\oplus \tilde{c}_{1} + 2(s_{1} \oplus (c_{2}\oplus \tilde{c}_{2})) + \dots + 2^{k-2}(s_{k-2} \oplus (c_{k-1}\oplus \tilde{c}_{k-1})) \nonumber \\
&+&2^{k-1}(s^{\star}_{k-1}+z+\tilde{z}),
\end{eqnarray}
where $s^{\star}_{k-1}$ is $s_{k-1}$ and $s_{i}, i=1, \dots, L$ as in Equation (\ref{si}).

	We aim to prove that the formula presented in Equation (\ref{formulasum}) is also satisfied for $L=k.$ So, consider the main code $\mathcal{C} \in \mathbb{F}_{2}^{nk}$ with subcodes $\mathcal{C}_{1}, \dots, \mathcal{C}_{k-1}, \mathcal{C}_{k} \in \mathbb{F}_{2}^{n}.$ Suppose $\overline{x},\overline{y} \in \Gamma_{C^\star}$ such that
\begin{eqnarray}
\overline{x}&=&c_{1}+2c_{2}+\dots+2^{k-2}c_{k-1}+2^{k-1}c_{k}+2^{k}z \\
\overline{y}&=&\tilde{c}_{1}+2\tilde{c}_{2}+\dots+2^{k-2}\tilde{c}_{k-1}+2^{k-1}\tilde{c}_{k}+2^{k}\tilde{z}.
\end{eqnarray} 
So we can write, applying the induction hypothesis
\begin{eqnarray}
\overline{x}+\overline{y} &=& c_{1}\oplus \tilde{c}_{1} + 2(s_{1} \oplus (c_{2}\oplus \tilde{c}_{2})) + \dots + 2^{k-2}(s_{k-2} \oplus (c_{k-1}\oplus \tilde{c}_{k-1})) + \nonumber \\
& &2^{k-1}(s^{\star}_{k-1} + c_{k}+ \tilde{c}_{k})+2^{k}(z+\tilde{z}),
\end{eqnarray}
where $s^{\star}_{k-1}$ is $s_{k-1}$ with the real sum instead of modulo$-2$ sum. By doing all the decompositions to change the real sum $s^{\star}_{k-1} + c_{k}+ \tilde{c}_{k}$ to $s_{k-1} \oplus c_{k} \oplus \tilde{c}_{k}$ we have
\begin{eqnarray}
\overline{x}+\overline{y} &=& c_{1}\oplus \tilde{c}_{1} + 2(s_{1} \oplus (c_{2}\oplus \tilde{c}_{2})) + \dots + 2^{k-2}(s_{k-2} \oplus (c_{k-1}\oplus \tilde{c}_{k-1})) + \nonumber \\
& &2^{k-1}(s_{k-1} \oplus (c_{k} \oplus \tilde{c}_{k}))+2^{k}(\underbrace{(c_{k} \ast \tilde{c}_{k}) + r_{k}^{(1)}+r_{k}^{(2)} + \dots + r_{k}^{(k-1)}}_{s^{\star}_{k}}+ z+\tilde{z}).
\end{eqnarray}
This formula is exactly as we expected and it concludes the proof.
\end{proof} 

	At this point, we are ready to present the proof of Theorem \ref{coro4}.

\begin{proof} For any $x, y \in \Gamma_{C^{\star}},$ written as in Equations (\ref{eqx}) and (\ref{eqy}), we have $x+y$ as given in Lemma \ref{lemmasum} (Equations (\ref{formulasum}) and (\ref{si})) and we need to verify if $x+y \in \Gamma_{C^{\star}}.$

Clearly $x+y \in \mathcal{C}_{1}+2\mathcal{C}_{2}+\dots+2^{L-1}\mathcal{C}_{L}+2^{L}\mathbb{Z}^{n}.$ It remains to demonstrate that $(c_{1}\oplus \tilde{c}_{1},  s_{1} \oplus c_{2}\oplus \tilde{c}_{2}, \dots, s_{L-1} \oplus c_{L} \oplus \tilde{c}_{L}) \in \mathcal{C},$ where $s_1, \dots, s_{L-1}$ are the ``carry'' terms defined in Equation \eqref{si}. Using the fact that $\mathcal{C}_{i-1} \subseteq \mathcal{S}_{i}(0, \dots, 0)$ for all $i=2, \dots, L$ are closed under the Schur product, it holds that
\begin{align}
(c_{1}\oplus \tilde{c}_{1},  s_{1} \oplus c_{2}\oplus \tilde{c}_{2}, \dots, s_{L-1} \oplus c_{L}\oplus \tilde{c}_{L}) & = \underbrace{(c_{1}\oplus \tilde{c}_{1},c_{2}\oplus \tilde{c}_{2}, \dots,  c_{L}\oplus \tilde{c}_{L})}_{\in \mathcal{C}} \oplus  \nonumber \\
& \oplus \underbrace{(0,s_{1}, \dots, 0)}_{\in \mathcal{C}} \oplus \dots \oplus  \underbrace{(0,\dots, 0, s_{L-1})}_{\in \mathcal{C}} \in \mathcal{C}.
\end{align}
Observe that any $nL-$tuple $(0, \dots, s_{i-1}, \dots, 0) \in \mathcal{C}$ because by hypothesis, the chain $\mathcal{S}_{i}(0, \dots, 0)$ closes $\mathcal{C}_{i-1}$ under Schur product, hence $S_{i}(0, \dots, 0)$ contains $(c_{i-1}*\tilde{c}_{i-1}), r_{i-1}^{(1)},....,r_{i-1}^{(i-2)}$ which is sufficient to guarantee that $s_{i-1} \in \mathcal{S}_{i}(0,\dots,0)$ so $(0, \dots, s_{i-1}, \dots, 0) \in \mathcal{C},$ for all $i=2, \dots, L-1.$ Also, if $x \in \Gamma_{C^\star},$ it is also valid that $-x \in \Gamma_{C^\star}.$
\end{proof}

	While $\mathcal{S}_{i}(0,\dots,0) \subseteq \mathcal{C}_{i}, \ i=1, \dots, L$ by construction, note that the assumption that $\mathcal{C}_{i} \subseteq \mathcal{S}_{i+1}(0,\dots,0),$ $i=1, \dots, L-1$ in Theorem \ref{coro4} is not always satisfied by a general Construction $C^{\star},$ even when Construction $C^\star$ is a lattice (Example \ref{exnonesting}). On the other hand, the conditions provided by Theorem \ref{coro4} are easier to verify, as it can be seen, for example, in the case of the Leech lattice.
	
\begin{example}(Latticeness of the Leech lattice via Theorem \ref{coro4}) We want to check whether the proposed codes $\mathcal{C}_{1}, \mathcal{C}_{2}$ and $\mathcal{C}_{3}$ in Example \ref{exleech} satisfy the conditions stated by Theorem \ref{coro4}.

	Initially, one can observe that $\mathcal{S}_{2}(0,\dots,0)= \mathcal{C}_{2}$ and  $\mathcal{S}_{3}(0,\dots,0)= \tilde{\mathcal{C}}_{3}=\{(x_{1}, \dots, x_{24}) \in \mathbb{F}_{2}^{24}:  \sum_{i=1}^{24} x_{1} \equiv 0 \bmod 2\}.$ Hence, we need to verify that $\mathcal{C}_{1} \subseteq \mathcal{S}_{2}(0,\dots,0) \subseteq \mathcal{C}_{2}  \subseteq  \mathcal{S}_{3}(0,\dots,0) \subseteq \mathcal{C}_{3}$ and that $\mathcal{S}_{i}(0,\dots,0)$ closes $\mathcal{C}_{i-1}$ under Schur product for $i=2,3.$ 
Indeed $\mathcal{C}_{1} \subseteq \mathcal{S}_{2}(0,\dots,0) = \mathcal{C}_{2},$ since $(0,\dots, 0) \in \mathcal{C}_{2}$ and if we consider the parity check matrix $H \in \mathbb{F}_{2}^{12 \times 24}$ of the $[24,12,8]-$Golay code denoted previously as $\mathcal{C}_2,$
\begin{equation}
H=\begin{pmatrix}
B_{12 \times 12} & \mid & I_{12 \times 12}
\end{pmatrix},
\end{equation}
where 
\begin{equation}
B_{12 \times 12}=\left(
\begin{array}{cccccccccccc}
 1 & 1 & 0 & 1 & 1 & 1 & 0 & 0 & 0 & 1 & 0 & 1  \\
 1 & 0 & 1 & 1 & 1 & 0 & 0 & 0 & 1 & 0 & 1 & 1  \\
 0 & 1 & 1 & 1 & 0 & 0 & 0 & 1 & 0 & 1 & 1 & 1 \\
 1 & 1 & 1 & 0 & 0 & 0 & 1 & 0 & 1 & 1 & 0 & 1  \\
 1 & 1 & 0 & 0 & 0 & 1 & 0 & 1 & 1 & 0 & 1 & 1  \\
 1 & 0 & 0 & 0 & 1 & 0 & 1 & 1 & 0 & 1 & 1 & 1  \\
 0 & 0 & 0 & 1 & 0 & 1 & 1 & 0 & 1 & 1 & 1 & 1  \\
 0 & 0 & 1 & 0 & 1 & 1 & 0 & 1 & 1 & 1 & 0 & 1  \\
 0 & 1 & 0 & 1 & 1 & 0 & 1 & 1 & 1 & 0 & 0 & 1  \\
 1 & 0 & 1 & 1 & 0 & 1 & 1 & 1 & 0 & 0 & 0 & 1  \\
 0 & 1 & 1 & 0 & 1 & 1 & 1 & 0 & 0 & 0 & 1 & 1 \\
 1 & 1 & 1 & 1 & 1 & 1 & 1 & 1 & 1 & 1 & 1 & 0  \\
\end{array}
\right)
\end{equation}
it is straightforward to see that $H \cdot (1, \dots, 1)^{T} = 0 \in \mathbb{F}_{2}^{12},$ so $(1, \dots, 1) \in \mathcal{C}_{2}$ which implies that $\mathcal{C}_{1} \subseteq \mathcal{S}_{2}(0,\dots,0).$ Moreover, an element $c_2 \in \mathcal{C}_{2}$ can be written as $c_2=G \cdot a,$ where $G=\begin{pmatrix}
I_{12 \times 12} \\
\hline 
B_{12 \times 12} 
\end{pmatrix}$ is the generator matrix of the Golay code and $a \in \mathbb{F}_{2}^{12}.$ Thus, when we sum all the coordinates of the resulting vector $c_{2} = G \cdot a$ we have $8 a_1 +8 a_{2} + 8 a_{3} + 8 a_{4} + 8a_{5} + 8a_{6} + 8 a_{7} + 8 a_{8} + 8 a_{9}+8a_{10} +8a_{11} + 12a_{12} \equiv 0 \bmod 2 \Rightarrow c_{2} \in \tilde{\mathcal{C}}_{3} = \mathcal{S}_{3}(0,\dots,0).$ Hence,
\begin{equation}
\mathcal{C}_{1} \subseteq \mathcal{S}_{2}(0,\dots,0) \subseteq \mathcal{C}_{2}  \subseteq  \mathcal{S}_{3}(0,\dots,0) \subseteq \mathcal{C}_{3}.
\end{equation}

	We still need to prove that 
\begin{itemize}
\item $\mathcal{S}_{2}(0,\dots,0)$ closes $\mathcal{C}_{1}$ under Schur product and this is clearly true because the Schur product of all elements in $\mathcal{C}_{1}$ belong to $\mathcal{S}_{2}(0,\dots,0).$

\item $\mathcal{S}_{3}(0,\dots,0)$ closes $\mathcal{C}_{2}$ under Schur product: if we consider $c_{2}= G \cdot a \in \mathcal{C}_{2}$ and $\tilde{c}_{2}=G.\tilde{a} \in \mathcal{C}_{2},$ it can be verified (using the software Mathematica \cite{wolfram}) that the sum of all coordinates of the Schur product $c_{2} \ast \tilde{c}_{2}$ sum zero modulo $2,$ i.e., $c_{2} \ast \tilde{c}_{2} \in \mathcal{S}_{3}(0,\dots,0)= \tilde{C}_{3}.$ 
\end{itemize}
\end{example}
	
	We establish next a necessary and sufficient condition regarding the latticeness of $\Gamma_{C^\star}.$
	
\begin{theorem} (\textit{Lattice condition for $\Gamma_{C^\star}$}) \label{thmcomplete} Let $\mathcal{C} \subseteq \mathbb{F}_{2}^{nL}$ be a linear binary code that generates $\Gamma_{C^\star}$ and let the set $\mathcal{S}=\{(0,s_{1}, \dots, s_{L-1})\} \subseteq \mathbb{F}_{2}^{nL}$ defined for all pairs $c, \tilde{c} \in \mathcal{C}$ (including the case $c=\tilde{c}),$ where $s_i,$ $i=1,\dots, L-1$ are defined as in Equation \eqref{si}.
Then, the constellation $\Gamma_{C^{\star}}$ is a lattice if and only if $\mathcal{S} \subseteq \mathcal{C}.$
\end{theorem}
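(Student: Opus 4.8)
The plan is to reduce the lattice property to closure under addition and then read that closure off directly from the sum formula in Lemma \ref{lemmasum}. Since $\Gamma_{C^\star}$ always contains the origin (take all $c_i=0$ and $z=0$) and is closed under negation (as already observed at the end of the proof of Theorem \ref{coro4}), and since it is a discrete subset of $\mathbb{R}^n$, it is a lattice if and only if it is closed under real addition. So the entire statement amounts to characterizing when $x+y \in \Gamma_{C^\star}$ for every $x,y \in \Gamma_{C^\star}$.

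First I would apply Lemma \ref{lemmasum} to arbitrary $x,y \in \Gamma_{C^\star}$ written as in \eqref{eqx} and \eqref{eqy}. The formula \eqref{formulasum} shows that $x+y$ already lies in $\mathcal{C}_{1}+2\mathcal{C}_{2}+\dots+2^{L-1}\mathcal{C}_{L}+2^{L}\mathbb{Z}^{n}$, so the only way closure can fail is through membership of the digit tuple in the main code: $x+y \in \Gamma_{C^\star}$ exactly when $(c_{1}\oplus \tilde{c}_{1},\, s_{1}\oplus c_{2}\oplus \tilde{c}_{2},\, \dots,\, s_{L-1}\oplus c_{L}\oplus \tilde{c}_{L}) \in \mathcal{C}$. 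The key algebraic step is to split this tuple as
\[
(c_{1}\oplus \tilde{c}_{1},\dots,c_{L}\oplus \tilde{c}_{L}) \oplus (0,s_{1},\dots,s_{L-1}) = (c\oplus \tilde{c}) \oplus (0,s_{1},\dots,s_{L-1}).
\]
Since $\mathcal{C}$ is linear we have $c\oplus \tilde{c}\in \mathcal{C}$, and because $\mathcal{C}$ is a subspace the displayed tuple belongs to $\mathcal{C}$ if and only if $(0,s_{1},\dots,s_{L-1})\in \mathcal{C}$. This is the crucial equivalence: closure of the sum of a given pair is equivalent to membership of that pair's carry vector $(0,s_{1},\dots,s_{L-1})$ in $\mathcal{C}$.

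For the reverse implication ($\mathcal{S}\subseteq \mathcal{C}$ implies that $\Gamma_{C^\star}$ is a lattice), I would observe that the carries $s_{i}$ depend only on the code digits $c_{i},\tilde{c}_{i}$ and not on $z,\tilde{z}$, so for any $x,y$ the associated vector $(0,s_{1},\dots,s_{L-1})$ lies in $\mathcal{S}\subseteq \mathcal{C}$; by the equivalence above $x+y\in \Gamma_{C^\star}$, which gives closure under addition and hence the lattice property. For the forward implication I would realize an arbitrary element of $\mathcal{S}$ explicitly: given $c,\tilde{c}\in \mathcal{C}$ producing carries $(0,s_{1},\dots,s_{L-1})$, set $z=\tilde{z}=0$ and take $x=c_{1}+2c_{2}+\dots+2^{L-1}c_{L}$ and $y=\tilde{c}_{1}+2\tilde{c}_{2}+\dots+2^{L-1}\tilde{c}_{L}$, both of which lie in $\Gamma_{C^\star}$. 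Latticeness forces $x+y\in \Gamma_{C^\star}$, so by the equivalence $(0,s_{1},\dots,s_{L-1})\in \mathcal{C}$; letting the pair range over all of $\mathcal{C}\times \mathcal{C}$ yields $\mathcal{S}\subseteq \mathcal{C}$.

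The argument is essentially immediate once Lemma \ref{lemmasum} is available, so I do not anticipate a genuine obstacle. The only points requiring care are (i) justifying that discreteness together with closure under addition and negation really yields a lattice, and (ii) confirming that $\mathcal{S}$ is well defined independently of the integer parts $z,\tilde{z}$, which is exactly what the explicit form of the carries in \eqref{si} guarantees. I would also remark that including the degenerate case $c=\tilde{c}$ in the definition of $\mathcal{S}$ corresponds to doubling a single point and is automatically covered by the same sum formula.
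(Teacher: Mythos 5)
Your core argument --- apply Lemma \ref{lemmasum}, note that $x+y \in \Gamma_{C^\star}$ exactly when the digit tuple $(c_1\oplus\tilde{c}_1,\, s_1\oplus c_2\oplus\tilde{c}_2,\dots,\, s_{L-1}\oplus c_L\oplus\tilde{c}_L)$ lies in $\mathcal{C}$, and split off the carry vector $(0,s_1,\dots,s_{L-1})$ using linearity of $\mathcal{C}$ --- is exactly the paper's proof of both implications. The genuine gap is in your reduction step. You claim that $\Gamma_{C^\star}$ is \emph{always} closed under negation, citing the remark at the end of the proof of Theorem \ref{coro4}; this is false for a general linear main code. Take $L=2$, $n=1$, $\mathcal{C}=\{(0,0),(1,0)\}$: then $\Gamma_{C^\star}=4\mathbb{Z}\cup(1+4\mathbb{Z})$ contains $1$ but not $-1$, since $-1\equiv 3 \pmod{4}$ and the digits of $3$ are $(1,1)\notin\mathcal{C}$. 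The remark you cite is made under the hypotheses of Theorem \ref{coro4}, where closure under addition had just been established; it cannot be imported as an unconditional property of Construction $C^\star$, and consequently your equivalence ``lattice if and only if closed under real addition'' is asserted on a false premise (the equivalence itself is true, but it needs an argument).

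The gap only affects your backward implication ($\mathcal{S}\subseteq\mathcal{C}\Rightarrow$ lattice): there you prove closure under addition but still owe a proof that $-x\in\Gamma_{C^\star}$. The repair is precisely the step the paper includes: negation closure follows from addition closure \emph{together with} periodicity modulo $2^L$. Indeed, addition closure gives $(2^L-1)x = x+\dots+x \in \Gamma_{C^\star}$, and since $x\in\mathbb{Z}^n$ we have $2^L x \in 2^L\mathbb{Z}^n$, so by the periodicity $\Gamma_{C^\star}+2^L\mathbb{Z}^n=\Gamma_{C^\star}$ the point $(2^L-1)x - 2^L x = -x$ also lies in $\Gamma_{C^\star}$. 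With this one-line addition (and noting that your forward direction needs no change, since latticeness trivially yields addition closure), your proof is complete and coincides with the paper's.
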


\begin{proof} $(\Rightarrow)$ Assume $\Gamma_{C^\star}$ to be lattice. This implies that if $x,y \in \Gamma_{C^\star}$ then $x+y \in \Gamma_{C^\star}.$ From the notation and result from Lemma \ref{lemmasum}, more specifically Equations (\ref{formulasum}), (\ref{si}), (\ref{eqx}) and (\ref{eqy}), it means that 
\begin{equation}
(c_{1}\oplus \tilde{c}_{1}, s_{1} \oplus (c_{2}\oplus \tilde{c}_{2}), \dots,s_{L-1} \oplus (c_{L}\oplus \tilde{c}_{L})) \in \mathcal{C}.
\end{equation}

	We can write this $L-$tuple as
\begin{eqnarray}
& & \underbrace{(c_{1}\oplus \tilde{c}_{1}, s_{1} \oplus (c_{2}\oplus \tilde{c}_{2}), \dots,s_{L-1} \oplus (c_{L}\oplus \tilde{c}_{L}))}_{\in \mathcal{C}}  = \nonumber \\ 
& & \underbrace{(c_{1}\oplus \tilde{c}_{1}, c_{2}\oplus \tilde{c}_{2}, \dots, c_{L}\oplus \tilde{c}_{L})}_{\in \mathcal{C}, \ \text{by linearity of} \ \mathcal{C}} \oplus (0, s_{1}, \dots, s_{L-1}) \Rightarrow (0, s_{1}, \dots, s_{L-1}) \in \mathcal{C}, \label{eqshurp}
\end{eqnarray}
which is the same as saying that for all $x,y \in \Gamma_{C^{\star}},$ $\mathcal{S} \subseteq \mathcal{C}.$ 

\noindent $(\Leftarrow)$ The converse is immediate, because given $x,y \in \Gamma_{C^\star}$ as in Equations (\ref{eqx}) and (\ref{eqy}), from the fact that $\mathcal{C}$ is linear and $\mathcal{S} \subseteq \mathcal{C},$ it is valid that
\begin{eqnarray}
& &(c_{1}\oplus \tilde{c}_{1}, c_{2}\oplus \tilde{c}_{2}, \dots, c_{L}\oplus \tilde{c}_{L}) \oplus (0, s_{1}, \dots, s_{L-1}) \in \mathcal{C} \nonumber \\
&\Rightarrow & (c_{1}\oplus \tilde{c}_{1}, s_{1} \oplus (c_{2}\oplus \tilde{c}_{2}), \dots,s_{L-1} \oplus (c_{L}\oplus \tilde{c}_{L})) \in \mathcal{C}
\end{eqnarray}
and $x+y \in \Gamma_{C^{\star}}.$ We still need to prove that $-x \in \Gamma_{C^\star}.$ It is true that for $x \in \Gamma_{C^\star},$ $x+x \in \Gamma_{C^\star}.$ 
 If we do this sum recursively, i.e., $\underbrace{x+x+x+ \dots + x}_{2^{L} \ \text{times}} = 2^{L}j$, for a suitably $j \in \mathbb{Z}^{n}.$ So, if we consider $y= \underbrace{x+x + \dots + x}_{2^{L}-1 \ \text{times}} + 2^{L}(-j) \in \Gamma_{C^\star},$ because it is a sum of elements in $\Gamma_{C^\star}$ for a convenient $-j \in \mathbb{Z}^{n}$ and it follows that $x+y=0 \in \mathbb{R}^{n}$ and $y=-x.$ 
\end{proof}


\begin{remark} \label{remarkc} Note that if $\mathcal{C}=\mathcal{C}_{1} \times \mathcal{C}_{2} \times \dots \times \mathcal{C}_{L},$ i.e., $\Gamma_{C^\star}=\Gamma_{C},$ then Theorem \ref{thmcomplete} specializes to the Kositwattanarerk and Oggier \cite{kositoggier} condition for the latticeness of Construction C. That is, $\mathcal{S} \subseteq \mathcal{C}$ is equivalent to the condition $\mathcal{C}_{1} \subseteq \mathcal{C}_{2} \subseteq \dots \subseteq \mathcal{C}_{L}$ and the chain being closed under Schur product. Indeed,

\begin{itemize}
\item[i)] $\mathcal{S} \subseteq \mathcal{C} \Rightarrow \mathcal{C}_{1} \subseteq \mathcal{C}_{2} \subseteq \dots \subseteq \mathcal{C}_{L}$ and the chain is closed under Schur product: we know that $\mathcal{S} \subseteq \mathcal{C}$ for any pair $c, \tilde{c}$ of codewords, so we take in particular $\tilde{c}=c$ and it follows that $\mathcal{C}_{1} \subseteq \mathcal{C}_{2} \subseteq \dots \subseteq \mathcal{C}_{L}.$ The fact that $\mathcal{C}=\mathcal{C}_{1} \times \mathcal{C}_{2} \times \dots \times \mathcal{C}_{L}$ allows us to guarantee that the element $(0, c_{1} \ast \tilde{c}_{1}, c_{2} \ast \tilde{c}_{2}, \dots, c_{L-1} \ast \tilde{c}_{L-1}) \in \mathcal{S} \subseteq \mathcal{C}$ and then the above chain will being closed under Schur product.

\item[ii)] $\mathcal{C}_{1} \subseteq \mathcal{C}_{2} \subseteq \dots \subseteq \mathcal{C}_{L}$ and the chain is closed under Schur product $\Rightarrow \mathcal{S} \subseteq \mathcal{C}:$ consider an element $(0, s_{1}, s_{2}, \dots, s_{L-1}) \in \mathcal{S},$ we want to prove that this element is also in $\mathcal{C}$ and to do that it is enough to prove that $s_{1} \in \mathcal{C}_{2}, s_{2} \in \mathcal{C}_{3} \dots$ $s_{L-1} \in \mathcal{C}_{L}.$ Indeed, due to the chain being closed under Schur product,
\begin{eqnarray}
s_{1} & = & c_{1} \ast \tilde{c}_{1} \in \mathcal{C}_{2} \\ \label{eqs1}
s_{2} & = &\underbrace{((c_{1} \ast \tilde{c}_{1}) \ast (c_{2} \oplus \tilde{c}_{2}))}_{\in \mathcal{C}_{3}} \oplus \underbrace{(c_{2} \ast \tilde{c}_{2})}_{\in \mathcal{C}_{3}} \in \mathcal{C}_{3} \\ \label{eqs2}
s_{3} & = & \underbrace{((c_{3} \nonumber \oplus \tilde{c}_{3}) \ast (c_{2} \ast \tilde{c}_{2})) \ast (c_{2} \oplus \tilde{c}_{2} \ast (c_1 \ast \tilde{c}_{1}))}_{\in \mathcal{C}_{4}} \nonumber \\
&  & \oplus \ \underbrace{((c_{3} \oplus \tilde{c}_{3}) \ast (c_{2} \ast \tilde{c}_{2}))}_{\in \mathcal{C}_{4}} \oplus \underbrace{(c_{3} \ast \tilde{c}_{3})}_{\in \mathcal{C}_{4}} \in \mathcal{C}_{4} \\ \label{eqs3}
& \vdots & \nonumber
%
\end{eqnarray}
and proceeding recursively, we can prove that $s_{i} \in \mathcal{C}_{i+1}, i=1, \dots, L-1.$
\end{itemize}
\end{remark}

\begin{example}(A lattice Construction $C^\star$) \label{exnonesting} Let $\mathcal{C}=\{(0,0,0,0,0,0),(1,0,1,1,0,1),(0,0,1,0,1,1),$ $(1,0,0,1,1,0),(0,0,0,0,1,0),(0,0,1,0,0,1),(1,0,0,1,0,0),(1,0,1,1,1,1)\} \subseteq \mathbb{F}_{2}^{6},$ with $L=3, n=2.$ One can notice that $\mathcal{C}_1=\{(0,0),(1,0)\} \nsubseteq \mathcal{S}_2(0,0,0,0)=\{(0,0)\}$ and Theorem \ref{coro4} cannot be applied to this case. However, the set $\mathcal{S}=\{(0,0,0,0,0,0),$ $(0,0,1,0,1,1),(0,0,0,0,1,0),$ $(0,0,1,0,0,1)\} \subseteq \mathcal{C}$ and  Theorem \ref{thmcomplete} guarantees that $\Gamma_{C^{\star}}$ is a lattice.
\end{example} 

	Remark \ref{remarkc} leads to the result below regarding the latticeness of the associated Construction C.
	
\begin{corollary}(\textit{Latticeness of associated Construction C}) Let $\mathcal{C} \subseteq \mathbb{F}_{2}^{nL}.$ If $\Gamma_{C^\star}$ is a lattice then the associated Construction C is also a lattice.
\end{corollary}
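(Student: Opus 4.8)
The plan is to reduce the claim to the Kositwattanarerk--Oggier criterion (Theorem \ref{thmko}) applied to the projection codes, feeding that criterion with the characterization of $\Gamma_{C^\star}$ from Theorem \ref{thmcomplete}. Since $\Gamma_{C^\star}$ is assumed to be a lattice, Theorem \ref{thmcomplete} gives $\mathcal{S}\subseteq\mathcal{C}$, where $\mathcal{S}=\{(0,s_1,\dots,s_{L-1})\}$ is built from Equation \eqref{si} over all pairs $c,\tilde c\in\mathcal{C}$. By Remark \ref{rem2} the associated Construction $C$ is precisely the Construction $C^\star$ of the product code $\mathcal{C}'=\mathcal{C}_1\times\cdots\times\mathcal{C}_L$; hence, by Theorem \ref{thmko}, $\Gamma_C$ is a lattice as soon as the projection codes satisfy $\mathcal{C}_1\subseteq\cdots\subseteq\mathcal{C}_L$ and this chain is closed under the Schur product. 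So the whole task is to deduce these two properties of $\mathcal{C}_1,\dots,\mathcal{C}_L$ from the single hypothesis $\mathcal{S}\subseteq\mathcal{C}$.

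I would first dispatch the nesting. Setting $\tilde c=c$ collapses every carry in Equation \eqref{si}: one has $c_i\ast c_i=c_i$ while each $r_i^{(j)}$ carries a factor $c_i\oplus c_i=0$, so $s_i=c_i$. Therefore $(0,c_1,\dots,c_{L-1})\in\mathcal{S}\subseteq\mathcal{C}$ for every codeword $c=(c_1,\dots,c_L)\in\mathcal{C}$. Reading the $i$-th length-$n$ block of this codeword gives $c_{i-1}\in\mathcal{C}_i$, and letting $c$ vary (so that $c_{i-1}$ sweeps all of $\mathcal{C}_{i-1}$) yields $\mathcal{C}_{i-1}\subseteq\mathcal{C}_i$ for $i=2,\dots,L$. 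As a by-product this shows that $\mathcal{C}$ is invariant under the shift $\sigma\colon(c_1,\dots,c_L)\mapsto(0,c_1,\dots,c_{L-1})$, a fact I will reuse below.

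For the Schur closure I would show, by induction on the level, that $\mathcal{C}_i\ast\mathcal{C}_i\subseteq\mathcal{C}_{i+1}$; together with nesting this is exactly the closure invoked by Theorem \ref{thmko} (compare Remark \ref{remarkc}). The base case $i=1$ is immediate: $s_1=c_1\ast\tilde c_1$ depends only on the first blocks, reading the second block of $(0,s_1,\dots)\in\mathcal{C}$ gives $c_1\ast\tilde c_1\in\mathcal{C}_2$, and since $c_1,\tilde c_1$ range independently over $\mathcal{C}_1$ this is $\mathcal{C}_1\ast\mathcal{C}_1\subseteq\mathcal{C}_2$. The inductive step for $i\ge 2$ is the main obstacle. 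The relevant carry $s_i=(c_i\ast\tilde c_i)\oplus r_i^{(1)}\oplus\cdots\oplus r_i^{(i-1)}$ entangles the pure Schur product $c_i\ast\tilde c_i$ that I want with lower-level cross terms, e.g. $r_i^{(1)}=(c_i\oplus\tilde c_i)\ast(c_{i-1}\ast\tilde c_{i-1})$; and, in contrast with the product-code situation of Remark \ref{remarkc}, for a general \emph{correlated} code $\mathcal{C}$ I cannot simply force the lower blocks of $c,\tilde c$ to vanish so as to isolate $c_i\ast\tilde c_i$.

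To break this entanglement I would use two levers. First, the Schur product is $\mathbb{F}_2$-bilinear with respect to $\oplus$, so it suffices to verify $\mathcal{C}_i\ast\mathcal{C}_i\subseteq\mathcal{C}_{i+1}$ on a basis of $\mathcal{C}_i$. Second, for prescribed level-$i$ values $a,b$ there are typically many codewords realizing them, differing by codewords supported in lower levels; subtracting the membership relations $s_i\in\mathcal{C}_{i+1}$ coming from two such realizations cancels the common term $a\ast b$ and leaves only cross terms, each of which has a factor equal to a lower-level Schur product already controlled by the induction hypothesis. Feeding these relations back, together with the $\sigma$-invariance (which realizes the elements of $\mathcal{C}_{i-1}$ at level $i$ with vanishing lower blocks), should pin $c_i\ast\tilde c_i$ itself inside $\mathcal{C}_{i+1}$. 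The careful bookkeeping of these cancellations---especially for level-$i$ values lying outside $\mathcal{C}_{i-1}$, where no lower block can be zeroed directly---is where the real work sits; once $\mathcal{C}_i\ast\mathcal{C}_i\subseteq\mathcal{C}_{i+1}$ is secured for all $i$, Theorem \ref{thmko} delivers the latticeness of $\Gamma_C$.
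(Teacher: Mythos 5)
Your reduction is the same as the paper's: invoke Theorem~\ref{thmcomplete} to turn the hypothesis into $\mathcal{S}\subseteq\mathcal{C}$, then extract the Kositwattanarerk--Oggier conditions (nesting plus Schur closure of the projection chain) and finish via Theorem~\ref{thmko} / Remark~\ref{remarkc}. Your derivation of the nesting (diagonal pairs $\tilde c=c$, hence $\sigma$-invariance) and of the level-$1$ closure $\mathcal{C}_1\ast\mathcal{C}_1\subseteq\mathcal{C}_2$ is correct. But the proposal is not a proof: for $i\geq 2$ you only describe a plan, and you yourself flag the unresolved case. This gap is genuine, and it is the entire content of the corollary beyond $L=2$: the cross term $r_i^{(1)}=(c_i\oplus\tilde c_i)\ast(c_{i-1}\ast\tilde c_{i-1})$ lies a priori only in $\mathcal{C}_i\ast\mathcal{C}_i$, which is exactly what you are trying to control, and neither of your two levers breaks this circularity. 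Bilinearity reduces to basis pairs but does not help with the cross term; and subtracting the relations from two realizations of the same pair $(a,b)$ leaves $(a\oplus b)\ast\bigl((c_1\ast\tilde c_1)\oplus(c_1'\ast\tilde c_1')\bigr)$, again a product of two $\mathcal{C}_2$-elements. Note also that $\sigma$-invariance only zeroes the \emph{first} block, so it realizes elements of $\mathcal{C}_1$, not of $\mathcal{C}_{i-1}$, with vanishing lower blocks. (To be fair, the paper's own proof is equally thin here: it cites Remark~\ref{remarkc}, whose proof of the direction needed explicitly uses the product structure $\mathcal{C}=\mathcal{C}_1\times\cdots\times\mathcal{C}_L$, which the correlated main code does not have; so the difficulty you isolated is real and is not addressed by the paper either.)

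The idea that actually closes the gap is to apply the hypothesis $\mathcal{S}\subseteq\mathcal{C}$ \emph{recursively}: carry codewords are themselves codewords, so they may serve as members of new pairs. For $i=2$, let $e=(0,\,c_1\ast\tilde c_1,\,s_2,\dots,s_{L-1})\in\mathcal{C}$ be the carry codeword of the pair $(c,\tilde c)$. Since $e_1=0$, the carry codeword of the pair $(c,e)$ is $(0,\,0,\,c_2\ast(c_1\ast\tilde c_1),\dots)\in\mathcal{C}$, hence $c_2\ast(c_1\ast\tilde c_1)\in\mathcal{C}_3$; likewise the pair $(\tilde c,e)$ gives $\tilde c_2\ast(c_1\ast\tilde c_1)\in\mathcal{C}_3$. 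Adding these (linearity of $\mathcal{C}_3$) yields $r_2^{(1)}=(c_2\oplus\tilde c_2)\ast(c_1\ast\tilde c_1)\in\mathcal{C}_3$, and therefore $c_2\ast\tilde c_2=s_2\oplus r_2^{(1)}\in\mathcal{C}_3$, with no need to choose special realizations at all. Iterating this device --- pairing with carry codewords, whose low blocks vanish, to peel off the cross terms $r_i^{(j)}$ one at a time --- is what makes an induction over the levels go through; with that replacing your second lever, the remainder of your outline (nesting, base case, and the appeal to Theorem~\ref{thmko}) is sound.
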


\begin{proof} If $\Gamma_{C^{\star}}$ is a lattice, then according to Theorem \ref{thmcomplete}, $\mathcal{S} \subseteq \mathcal{C}.$ In the associated Construction C, we make $\mathcal{C}=\mathcal{C}_{1} \times \mathcal{C}_{2} \times \dots \times \mathcal{C}_{L},$ where $\mathcal{C}_{1}, \mathcal{C}_{2}, \dots, \mathcal{C}_{L}$ are the projection codes. Hence, according to the Remark \ref{remarkc}, $\mathcal{S} \subseteq \mathcal{C}$ is equivalent to $\mathcal{C}_{1} \subseteq \mathcal{C}_{2} \subseteq \dots \subseteq \mathcal{C}_{L}$ and the chain being closed under Schur product, which is sufficient to guarantee that $\Gamma_{C}$ is a lattice.
\end{proof}

\section{Minimum Euclidean distance of Construction $C^\star$} \label{secdetermd}

	An important observation is that unlike Construction C, Construction $C^\star$  is not equi-minimum distance, i.e., in general
if the minimum distance $d$ is achieved by a pair of points $x,y \in \Gamma_{C^\star},$ i.e., $||x-y|| = d,$ there may be some other $x' \in \Gamma_{C^\star}$ such that there is no $y' \in \Gamma_{C^\star}$ that makes $||x'-y'||=d.$
	
\begin{example} (A non-equi-minimum distance Construction $C^\star$) \label{exemd} Consider an $L=3$ and $n=1$ Construction $C^\star$ with main code $\mathcal{C}=\{(0,0,0),(1,0,1),(0,1,1),(1,1,0)\}$ $\subseteq \mathbb{F}_{2}^{3}.$ Thus, elements in $\Gamma_{C^{\star}}$ are 
\begin{eqnarray}
\Gamma_{C^\star}=\{0+8z, 5+8z, 6+8z, 3+8z\}, z \in \mathbb{Z}.
\end{eqnarray}

	The minimum Euclidean distance of $\Gamma_{C^\star}$ is $||6-5||=1,$ but if we fix $x'=0 \in \Gamma_{C^{\star}}$ there is no element $y' \in \Gamma_{C^\star}$ such that $||y'||=1.$ 
\end{example}	

	In order to fix notation, we introduce the following two definitions.
	
\begin{definition} (Points in the constellation induced by a fixed $c \in \mathcal{C}$) We denote by $\Gamma_{C^\star}(c),$ the set of points in $\Gamma_{C^\star}$ generated by a fixed element $c \in \mathcal{C} \subseteq \mathbb{F}_{2}^{nL}.$ In other words, $\Gamma_{C^\star}(c)$ is a shift of $2^L\mathbb{Z}^n$ by $c.$
\end{definition}

\begin{definition}(Squared minimum distances) We denote by $d_{\min}^{2}(\Gamma)$ the squared minimum distance between any two distinct points in a constellation $\Gamma \subset \mathbb{R}^n$ and by $d_{\min}^2(\Gamma,0)$ the squared minimum distance of any nonzero element $c \in \Gamma$ to zero. 
\end{definition}

	If $\Gamma_{C^\star}$ is equi-minimum distance, $d_{\min}^{2}(\Gamma_{C^\star})=d_{\min}^{2}(\Gamma_{C^\star},0),$ we know that to each $c \in \mathcal{C} \subseteq \mathbb{F}_{2}^{nL}, c \neq 0$ we associate a unique element $x(c) \in \Gamma_{C^\star} \subset \mathbb{R}^{n}$ in the hypercube $[-2^{L-1},2^{L-1}]^{n},$ which gives the minimum distance of $\Gamma_{C^\star}(c).$ An explicit expression for the nearest constellation point in $\Gamma_{C^\star}(c)$ to zero is 
\begin{eqnarray} \label{eqm}
d_{\min}^{2}(\Gamma_{C^\star}(c),0) = m_1 +2^2m_2 + 3^2m_3 + \dots + {(2^{L -1}-1)}^2 m_{2^{L -1}-1} + {(2^{L-1})}^2 m_{2^{L -1}},
\end{eqnarray}
where each $m_i, i=1, \dots, 2^{L -1}$ are obtained as follows. For $c=(c_{11}, \dots, c_{1n},c_{21}, \dots, c_{2n},  \dots,$ $ c_{L1}, \dots, c_{Ln})$ we consider the $L-$tuples $\tilde{c}_1  = (c_{11}, \dots, c_{L1}),$ $\tilde{c}_2  =  (c_{12}, \dots, c_{L2}), \dots,$ $\tilde{c}_L  = (c_{1n}, \dots, c_{Ln})$ and $m_i, i=1, \dots, 2^{L -1}$ as
\begin{eqnarray} \label{mj}
m_i & = & \text{number of} \ L-tuples \ c_j, \ j=1, \dots, n,  \ \text{such that} \ c_j \ \text{is the binary representation} \nonumber \\
& & \text{of $i$ or the binary representation of} \ 2^{L-1}-i.
\end{eqnarray}

	To be more specific,	
\begin{align*}
m_1 & = \text{number of} \ c_i \ \text{such that} \ c_{i}=(1,0,0, \dots, 0) \ \text{or} \ c_i=(1,1, \dots, 1), & \nonumber \\
m_2 & = \text{number of} \ c_i \ \text{such that} \ c_{i}=(0,1,0, \dots, 0) \ \text{or} \ c_i=(0,1, \dots, 1), & \nonumber \\
m_3 & = \text{number of} \ c_i \ \text{such that} \ c_{i}=(1,1,0, \dots, 0) \ \text{or} \ c_i=(1,0,1, \dots, 1), & \nonumber \\
\vdots & ~~~~~~~~~~~~\vdots
\end{align*}
\begin{align}
m_{2^{L-1}-1} & = \text{number of} \ c_i \ \text{such that} \ c_{i}=(1,0, \dots, 1, 0) \ \text{or} \ c_i=(1,1,\dots, 0, 1, )& \nonumber \\
m_{2^{L-1}} & = \text{number of} \ c_i \ \text{such that} \ c_{i}=(0,0,0, \dots, 0,1).&
\end{align}

	Note that the second choices have the same coordinates as the first ones up to the first nonzero coordinate and after that, all coordinates are different. Moreover, $\displaystyle\sum_{i=1}^{2^L -1} m_i =n.$
	
\begin{remark} From the expression above, we can see that, given a codeword $c \in \mathcal{C}$ of weight $\omega(c)=w,$ $d_{\min}^{2}(\Gamma_{C^\star}(c),0) \geq \tfrac{w}{L},$ since the minimum distance will be achieved when the projection codewords of $c$ have the largest number of coincident coordinates as possible. Hence, if the minimum distance of the code is such that $d_{H}(\mathcal{C}) \geq L2^{2L},$ we can assert that $d_{\min}^{2}(\Gamma_{C^\star},0)=2^{2L}.$
\end{remark}

\begin{example}(Minimum distance of Construction $C^\star$) For $L=2$ and $w \geq 32,$ ($n \geq 16$), we have that $d_{\min}^{2}(\Gamma_{C^\star},0)=2^4.$
\end{example}

	A more concise expression for the minimum distance to zero in $\Gamma_{C^\star}$ can also be derived from \eqref{mj}, by observing that for $c=(c_{1}, c_{2}, \dots, c_{L}) \in \mathcal{C},$ $c \neq 0,$ $c_i=(c_{i1}, c_{i2}, \dots, c_{in}),$ $i=1, \dots, L:$
\begin{eqnarray}
d_{\min}^{2}(\Gamma_{C^\star}(c),0)=||2^{L-1}c_L - 2^{L-2}c_{L-1} - \dots - 2c_{2} - c_1||^2.
\end{eqnarray} 
	It follows that
\begin{equation} \label{dminclose}
d_{\min}^{2}(\Gamma_{C^\star},0)= \min_{c=(c_{1}, c_{2}, \dots, c_{L}) \in \mathcal{C} \atop c \neq 0} \{ ||2^{L-1}c_L - \sum_{i=1}^{L-1} 2^{i-1}c_{i}||^{2}, 2^{2L}\}.
\end{equation}

	If $\Gamma_{C^\star}$ is geometrically uniform, the above expression provides a closed-form formula for the minimum distance of $\Gamma_{C^\star},$ otherwise it is an upper bound for this distance. Therefore \eqref{dminclose} presents a closed-form formula for the minimum distance of a $2-$level Construction $C^\star$ and also when $\Gamma_{C^\star}$ is a lattice, for example.
	
	From Equation \eqref{dminclose}, it could be expected that given a code $\mathcal{C} \subseteq \mathbb{F}_{2}^{nL}$ with minimum weight of projection codes $d_H(\mathcal{C}_1), \dots, d_{H}(\mathcal{C}_L),$ a larger minimum distance will be achieved as $d_H(\mathcal{C}_{i})$ increases with $i,$ with $i=1, \dots, L.$ For example, for $L=2$ and weights of projection codes given by $d_{H}(\mathcal{C}_{1})$ and $d_{H}(\mathcal{C}_{2}),$ respectively, if $d_{H}(\mathcal{C}_{2}) > d_{H}(\mathcal{C}_{1}),$ by considering $||2c_2-c_1||^{2} = \langle 2c_2-c_1,2c_2-c_1 \rangle,$ we can derive from \eqref{dminclose} that 
\begin{equation} \label{dminclose2}
d_{\min}^{2}(\Gamma_{C^\star}) \geq \min \{ 4 d_{H}(\mathcal{C}_{2}) -3 d_{H}(\mathcal{C}_{1}), 16\}.
\end{equation}
	
	Regarding to general upper and lower bounds, since $\Gamma_{C^{\star}}$ is a subset of $\Gamma_C,$ $d_{\min}^{2}(\Gamma_{C^{\star}}) \geq d_{\min}^{2}(\Gamma_{{C}}),$ where $\Gamma_{{C}}$ is the associated Construction C (Definition \ref{associated}). A loose and direct upper bound for $d_{\min}^{2}(\Gamma_{C^{\star}})$ is given by:
\begin{equation} \label{bounds}
d_{\min}^{2}(\Gamma_{C^\star}) \leq d_{\min}^{2}(\Gamma_{C^\star},0) \leq d_{\min}^{2}(\overline{\mathcal{S}}) = \displaystyle\min_{d_{H}(\mathcal{S}_{i}(0,\dots,0)) \neq 0} \{2^{2(i-1)} d_{H}(\mathcal{S}_{i}(0,\dots,0)), 2^{2L}\},
\end{equation}
for $i= 1, \dots, L.$

	Next we compare minimum distances of $\Gamma_{C^\star}$ and the associated $\Gamma_C$ for some previous examples using Equations \eqref{dminclose} and \eqref{bounds}.

\begin{example}(Bounds for the minimum distance of Construction $C^\star$) \begin{enumerate}

\item For the Leech lattice presented in Example \ref{exleech}, we have that $d_{\min}^{2}(\Gamma_{{C}}) = \min \{24,32,32,64\}=24$, $d_{\min}^{2}(\overline{\mathcal{S}}) =  \min\{32,32,64\}=32$ as $\mathcal{S}_{1}(0,\dots,0)$ is a null set and $d_{\min}^{2}(\Gamma_{C^\star},0)=32.$ In this case, $d_{\min}^{2}(\Gamma_{C^\star}) = 32.$ 

\item  In Example \ref{exnonesting}, $d_{\min}^{2}(\Gamma_{{C}}) = \min \{1,4,16\}=1$ and $d_{\min}^{2}(\overline{\mathcal{S}}) =   \min\{16\}=16$ as $\mathcal{S}_{1}(0,\dots,0)$ and $\mathcal{S}_{2}(0,\dots,0)$ are null sets. Also, $d_{\min}^{2}(\Gamma_{{C^\star}},0)=5,$ which coincides with $d_{\min}^{2}(\Gamma_{{C^\star}}),$ because in this case Construction $C^\star$ is a lattice.

\item In Example \ref{exemd}, if we consider the associated Construction C, we have $d_{\min}^{2}(\Gamma_{{C}}) = \min \{1,4,16,64\}=1 $, $d_{\min}^{2}(\overline{\mathcal{S}}) =   \min\{64\}=64$ as $\mathcal{S}_{i}(0,\dots,0)$ are null sets for all $i=1,2,3$ and $d_{\min}^{2}(\Gamma_{{C^\star}},0) = 2$. Here, $d_{\min}^{2}(\Gamma_{C^\star})=1.$ 

\end{enumerate}
\end{example}

	To derive a condition that states when Construction $C^{\star}$ have a better packing density than associated Construction C, we observe that both constellations $\Gamma_{C^\star}$ and its associated $\Gamma_{C}$ contains the lattice $2^{L}\mathbb{Z}^{n},$ i.e., $2^L\mathbb{Z}^{n} \subseteq \Gamma_{C^\star} \subseteq \Gamma_{C}.$ If the number of points of $\Gamma_{C^\star}$ and $\Gamma_{C}$ inside the hypercube $[0,2^{L}]^n$ are respectively $|\mathcal{C}|$ and $|\mathcal{C}_{1}| \dots |\mathcal{C}_{L}|,$ where $\mathcal{C}_{i}, i=1, \dots, L$ are the projection codes, we can assert
\begin{equation}
\Delta(\Gamma_{C^\star})= \dfrac{|\mathcal{C}| \ vol\left(B\left(0,\frac{d_1}{2} \right)\right)}{2^{nL}} \ \ \text{and} \ \ \Delta(\Gamma_{C})= \dfrac{|\mathcal{C}_{1}| \dots |\mathcal{C}_{L}| \ vol\left(B\left(0,\frac{d_2}{2} \right)\right)}{2^{nL}}, 
\end{equation}
where $d_1=d_{\min}(\Gamma_{C^\star})$ and $d_2=d_{\min}(\Gamma_{C}).$ Hence, we can write the following remark:

\begin{remark}
\begin{enumerate}
\item $\Delta(\Gamma_{C^{\star}}) \geq \Delta (\Gamma_{C})$ if and only if ${\left(\dfrac{d_{1}}{d_{2}}\right)}^{n} \geq \dfrac{|\mathcal{C}_{1}| \dots |\mathcal{C}_{L}|}{|\mathcal{C}|},$ \\
\item $\rho_{\text{pack}}(\Gamma_{C^\star}) \geq \rho_{\text{pack}}(\Gamma_{C})$ if and only if $\dfrac{d_{1}}{d_{2}} \geq \left(\dfrac{|\mathcal{C}_{1}| \dots |\mathcal{C}_{L}|}{|\mathcal{C}|} \right)^{1/n},$ ~for~$\rho_{\text{pack}}(\Gamma)=(\Delta(\Gamma))^{1/n}.$
\end{enumerate}
\end{remark}

\begin{example}(Comparing packing densities of $2-$level Constructions $C^\star$ and $C$) Let $\mathcal{C} \subseteq \mathbb{F}_{2}^{2n},$ i.e., we are considering a Construction $C^\star$ with $L=2$ levels (which are geometrically uniform). If the minimum distance of the projection codes are $d_{H}(\mathcal{C}_{1})=1$ and  $d_{H}(\mathcal{C}_{2})=4,$ then, according to \eqref{dminclose2}, $d_{\min}^{2}(\Gamma_{C^\star}) \geq \min\{13, 16\}=13$ and $d_{\min}^{2}(\Gamma_{C}) =1.$ From the previous discussion, $\Delta(\Gamma_{C^{\star}}) \geq \Delta (\Gamma_{C})$ if 
\begin{equation}
(13)^{n/2} \geq \dfrac{|\mathcal{C}_{1}| |\mathcal{C}_{2}|}{|\mathcal{C}|}.
\end{equation}
\end{example}

\begin{example}(Packing densities of Constructions $C^\star$ and $C$)  Consider $\Gamma_{C^\star}$ with $L=2$ and $n=4,$ generated by the main code $\mathcal{C}= $ $\{(0,0,0,0,0,0,0,0),(1,1,1,1,1,1,0,0),(0,0,0,0,1,1,1,1),$ $(1,1,1,1,0,0,1,1)\}.$ Observe that from \eqref{dminclose}, $d_{\min}^2(\Gamma_{C^\star})=d_{\min}^{2}(\Gamma_{C})=4$ and $|\Gamma_{C}|/|\Gamma_{C^{\star}}|=2$ and the associated Construction C presents a better packing density in this case.  

	However, if we consider a code $\overline{\mathcal{C}}$ obtained as permutation of the projection codes of $\mathcal{C},$ i.e., $c=(c_1,c_2) \in \mathcal{C}$ if and only if $\overline{c}=(c_2,c_1) \in \overline{\mathcal{C}}$, we can see from \eqref{dminclose} that $d_{\min}^2(\Gamma_{C^\star})=4,$ $d_{\min}^{2}(\Gamma_{C})=2$ and again $|\Gamma_{C}|/|\Gamma_{C^{\star}}|=2.$ Here, $\left(\frac{2}{\sqrt{2}}\right)^{4} > 2$ and $\Gamma_{C^\star}$ has a better packing density.
\end{example}  

	Table \ref{tvv} summarizes density properties of previous examples according to the discussion presented in this subsection. The notation $\diamond$ below indicates those examples which are nonlattice constellations.

\begin{table}[H] \label{tvv}
\caption{Properties of Construction $C^{\star}$ and its associated Construction C}
\centering
 \begin{tabular}{ | c | c | c | c | c | c | c | c | }  
  \hline\noalign{\smallskip} 
  Example & Dimension & $d_{\min}^{2}(\Gamma_{C^{\star}})$ & $d_{\min}^{2}(\Gamma_{C})$ & $\Delta(\Gamma_{C^{\star}})$ & $\Delta(\Gamma_{C})$ & $\rho_{\text{pack}}(\Gamma_{C^{\star}})$ & $\rho_{\text{pack}}(\Gamma_{C})$  \\
  \noalign{\smallskip}\hline\noalign{\smallskip}
 \ref{ex1}$^\diamond$ & 2 & 1 & 1 & $\pi /16$ & $\pi /8$ & $0.4431$ & $0.6266$ \\
 \ref{ceconjecture} & 2 & 4 & 1 & $\pi /4$ & $\pi /8$ & $0.8862$ & $0.4431$ \\
 \ref{exleech} & 24 &32 & 24 & $0.001929$ & $0.00012$ & $0.7707$ & $0.6236$ \\
 \ref{exnonesting} &2 & 5 & 1 & $0.8781$ & $0.7853$ & $0.9209$ & $0.8861$ \\
  \ref{exemd}$^\diamond$ & 1 & 1 & 1 & $0.5$ & $1$ & $0.5$ & $1$ \\
  \noalign{\smallskip}\hline
\end{tabular}
\end{table}	

\section{Asymptotic packing density of random Construction $C^\star$}

	It is common to assess the ultimate potential of coded modulation schemes by looking on their asymptotic random-coding performance \cite{forneytrottchung, gallager65, shannon67}.  In the case of multilevel Constructions C and $C^\star,$ this amounts to using random (linear or nonlinear) binary component codes, and taking both the number of levels $L$ and the block length $n$ to infinity.

	From a sphere-packing viewpoint, a good reference for comparison is the Minkowski bound \cite{cassels71, conwaysloane, zamir2014}, which states that, in each dimension there exists a lattice whose packing efficiency is at least one half, i.e.,
\begin{equation}
\displaystyle\max_{\Lambda \in \mathbb{R}^n} \rho_{\text{pack}}(\Lambda) \geq \frac{1}{2}.
\end{equation}
It is believed that this bound represents the best asymptotically achievable packing efficiency by a lattice or an infinite nonlattice constellation. For AWGN channel coding, the corresponding goodness measures are the Poltyrev unconstrained capacity and the Poltyrev exponent \cite{poltyrev94}. The latter represents the best achievable error exponent over the high-SNR AWGN channel at rates near capacity, or (with its expurgated version) at rates far below capacity \cite{erez04}.

	In this section we use a simple random coding argument to demonstrate that Construction $C^\star$ can achieve the Minkowski bound $\rho_{\text{pack}}(\Gamma_{C^\star})=\frac{1}{2}$ for each block length $n.$ Combining this with the conjecture that the best asymptotic packing efficiency of Construction C is only $\rho_{\text{pack}}(\Gamma_{C}) \approx 0.4168$ (by the Erez multilevel ``Gilbert-Varshamov bound (GVB)-achieving'' coded modulation \cite{uriclass}), we conclude that Construction $C^\star$ is asymptotically superior to Construction C from a sphere packing viewpoint. 
	
	For the AWGN channel, the same random-coding argument implies that
Construction $C^\star$ with Euclidean decoding achieves the Poltyrev capacity
and error exponent\footnote{Euclidean decoding is fundamentally different than (the more practical yet suboptimal) parallel-bit decoding, assumed in the analysis of BICM, where performance is bounded by the BSC error exponent \cite{fab2008,WaFiHu}.}. We thus conclude that Construction $C^\star$ is asymptotically optimal for packing as well as for modulation over the AWGN channel.

\subsection{A modified Loeliger ensemble of Construction $C^\star$}\label{secloeliger}

	Let $\mathbb{Z}_{q}$ denote the ring of integers modulo $q$, i.e., $\mathbb{Z}_q=\{0,1,\dots,q-1\},$ with alphabet size $q = 2^L,$
and define the $q-$ary code $\mathcal{C}_{q^\star}$ in $\mathbb{Z}_{q}^n$ as the set
\begin{equation}\label{cqstar}
\mathcal{C}_{q^\star} = \{(c_1 + 2c_2 + 4c_3 + \dots + 2^{L-1}c_L) \bmod q: (c_1,\dots,c_L) \in \mathcal{C}\}.
\end{equation}
where $\mathcal{C} \subseteq \mathbb{F}_{2}^{nL}$ is the main code that generates $\Gamma_{C^\star}.$ Construction $C^\star$ is then given by lifting the $q-$ary code $\mathcal{C}_{q^\star}$ into the Euclidean space and replicating it by the cubic lattice $q\mathbb{Z}^n.$

	Our asymptotic analysis of Construction $C^\star$ follows the analysis
of the Loeliger ensemble \cite{erez05, loeliger97}, \cite[Sec. 7.9]{zamir2014}. The Loeliger ensemble is used to prove the existence of a lattice which is asymptotically good for packing, covering, modulation, and quantization. It is based on scaling and randomization of a $q-$ary Construction A lattice. That is, we lift a linear $q-$ary code $\tilde{\mathcal{C}}_{q}$ to the Euclidean space, replicate it by $q\mathbb{Z}^n,$ and multiply by a suitable scalar. In the Loeliger analysis, the alphabet size $q$ is taken to be a prime number and the elements of the generator matrix of $\tilde{\mathcal{C}}_{q}$ are drawn independently and uniformly over $\mathbb{Z}_q.$ Although the $q-$ary code $\mathcal{C}_{q^\star}$ as defined in Equation \eqref{cqstar} is not linear modulo $q$ (unless $\Gamma_{C^\star}$ satisfies the latticeness condition of Theorem \ref{thmcomplete}), and although $q = 2^L$ is not a prime number for $L > 1,$ random generation of the main code $\mathcal{C}$ is sufficient to prove asymptotic goodness.

	Specifically, we can prove the existence of a good constellation in a randomized $q-$ary code-based construction (lattice or nonlattice) if it satisfies the following conditions:
	
\begin{enumerate}
\item \label{cond1} each (nonzero) element of the underlying $q-$ary code is uniformly distributed over $\mathbb{Z}_q^n;$
\item \label{cond2} each pair of elements of the $q-$ary code is statistically independent$;$
\item \label{cond3} the constellation is scaled to a fixed point density, independent of $q$ and $n;$
\item \label{cond4} the period\footnote{The period of a constellation $\Gamma$ is the smallest number $m$ such that $\gamma \bmod m \in \Gamma,$ for all $\gamma \in \Gamma.$} of the scaled constellation grows like $\sqrt{n},$ as $q, n \rightarrow \infty;$
\item \label{cond5} the resolution\footnote{Resolution is the largest number $\delta$ such that the constellation $\Gamma \subseteq \delta \mathbb{Z}^n$.} of the scaled constellation goes to zero, as $q, n \rightarrow \infty.$
\end{enumerate}

	To guarantee these properties in Construction $C^\star$, suppose that each bit $c_{ij}$ of the main code $\mathcal{C} = \{c_{ij}: i=1, \dots, M, j=1,\dots,n\} \subseteq \mathbb{F}_{2}^{nL}$ is drawn independently with a fair coin flip. Clearly, the resulting nonlinear random binary code $\mathcal{C}$ induces a $q-$ary code $\mathcal{C}_{q^\star}$ which satisfies Conditions \ref{cond1} and \ref{cond2}, i.e., each element in $\mathcal{C}_{q^\star}$ is uniformly distributed over $\mathbb{Z}_q^n,$ and each distinct pair of elements is statistically independent. Moreover, the two conditions continue to hold even if $\mathcal{C}$ is a linear binary code, where each bit of its $k \times n$ generator matrix is drawn independently with a fair coin flip. See \cite[Section 6.2]{gallagerbook}.

\begin{remark} To gain some insight into these random properties, let us contrast them with randomized Construction C. Specifically, suppose that $\Gamma_C$ is generated by lifting and replication of the $q-$ary code
\begin{equation}\label{cqc}
\mathcal{C}_q = \{c_1 + 2c_2 + 4c_3 + \dots + 2^{L-1}c_L: c_1 \in \mathcal{C}_1,\dots, c_L \in \mathcal{C}_L\},
\end{equation}
where the component codes $\mathcal{C}_1,\dots, \mathcal{C}_L \subseteq \mathbb{F}_2^n$ are drawn at random. While Condition \ref{cond1} above holds, i.e. each element in $\mathcal{C}_q$ is uniform over $\mathbb{Z}_{q}^{n},$ distinct pairs in $\mathcal{C}_q$ are statistically independent only if they do not coincide in some of the levels.
For example, for $L=2,$ while the elements $c_1 + 2c_2$ and $c_1'+2c_2'$ are statistically independent, the elements $c_1 + 2c_2$ and $c_1+2c_2'$ are not, because they share the same least significant bit (LSB) vector $c_1.$ Thus, randomized Construction C fails to satisfy Condition \ref{cond2}.

	Intuitively, statistical dependence between the elements in the randomized $q-$ary code tends to generate closer points after lifting to the Euclidean space, hence a smaller minimum Euclidean distance. In this respect, Construction $C^\star$ better exploits the benefit of multiple levels compared to Construction C. This intuition is further quantified in Section \ref{sec_erez} below.
\end{remark}

	Returning to Construction $C^\star,$ we shall guarantee that the remaining Conditions \ref{cond3}, \ref{cond4} and \ref{cond5} mentioned above hold following the same derivation as for the Loeliger ensemble in \cite[Sec. 7.9]{zamir2014}. Specifically, let $R=k/n,$ $0 < R < 1$ denote the rate of the main code $\mathcal{C},$ and let $M = 2^{nLR}$ denote the number of points in $\mathcal{C}_{q^\star}.$  Then, the scaled constellation $a^\star \Gamma_{C^\star},$ where $a^\star$ is a scalar given by
\begin{equation}
a^\star = \dfrac{2^{LR}}{q} = \dfrac{1}{q^{1-R}}
\end{equation}
has a unit point density,
\begin{equation}
\dfrac{M}{(a^\star q)^n} = 1,
\end{equation}
independent of $q$ and $n,$ as required by Condition 3. Furthermore, the period of $a^\star \Gamma_{C^\star}$ is $q^R$ (instead of $q$ in the unscaled constellation) and the resolution is $\tfrac{1}{q^{1-R}}$ (instead of 1).   If we now let the alphabet size $q$ grow with $n$ like $O(n^{1/2R}),$ then on the one hand the period will grow to infinity like $\sqrt{n}$ (so the intra-coset distance would not dominate the minimum Euclidean distance), and on the other hand the resolution will shrink to zero as $n$ goes to infinity, thus satisfying Conditions \ref{cond4} and \ref{cond5}. 

	In what follows, we summarize a previous known result showing that Construction C does not asymptotically achieve the Minkowski bound, which asserts the aforementioned superiority of Construction $C^\star.$

\subsection{The Erez ``GVB achieving codes'' Construction C}\label{sec_erez}

	In unpublished class notes \cite{uriclass}, Erez computed the packing efficiency of multilevel coded modulation in the limit of an infinite number of levels $L \rightarrow \infty.$ Clearly, in this limit, coded modulation and Construction C are equivalent. Erez assumed that the component binary codes have balanced Hamming distances \cite{conwaysloane}, i.e., the Hamming distance $d_H(\mathcal{C}_{i})$ of $\mathcal{C}_i$ is $4$ times smaller than $d_H(\mathcal{C}_{i-1})$ for $i=2, \dots, L.$ He also admitted that all component codes satisfy the Gilbert-Varshamov bound with equality \cite{gilbert52, varshamov57}. Since the GVB is believed to characterize the best asymptotic tradeoff between coding rate and minimum Hamming distance of a binary code, his computation amounts to the best packing efficiency of Construction C.

	GVB-achieving codes are those whose size is related to their minimum Hamming distance $d=d_{H}(\mathcal{C})$ via 	
\begin{equation} \label{eqgv}
|\mathcal{C}| \geq \dfrac{2^{n}}{|B(d-1,n)|},
\end{equation}
where $B(r,n)$ denotes an $n-$dimensional zero-centered Hamming ball of radius $r,$ which corresponds to the set of all $n$ length binary vectors with Hamming weight smaller than or equal to $r.$ For a large $n,$ $|B(r,n)| \doteq 2^{nH(r/n)}$ and $H(p)=-p\log_2 p - (1-p)\log_2 (1-p)$ is the binary entropy function for $p \in [0, 1].$ 

	GVB achieving codes can be generated by a ``cookie cutting'' greedy construction \cite[pp. 266]{cover06} \cite{gilbert52, guru2010}, or asymptotically for large $n$ using expurgated random binary codes  \cite{gallagerbook} or random linear codes \cite{barg02, varshamov57}

	We now assume that the component codes of Construction C are balanced, where $\alpha_{1}=d_{H}(\mathcal{C}_1)/n$ and $\alpha_i=d_{H}(\mathcal{C}_i)/n={\alpha_1}/{2^{2(i-1)}},$ for $i=2,\dots,L.$ In addition, if we admit that the codes satisfy the GVB with equality, then we obtain a total number of codewords inside the $q^n$ cube given by
\begin{align}
M = M_1 \cdot M_2 \cdot ~ \dots ~ \cdot M_L = \dfrac{2^{nL}}{2^{n[H(\alpha_1)+\dots+H(\alpha_L)]}},
\end{align}
for large $n.$ Hence, the point density is $(2^{n[H(\alpha_1)+\dots+H(\alpha_1/2^{2(L-1)})]})^{-1}$ points per unit volume for large $n,$ and the minimum Euclidean distance of this constellation is $d_{\min}(\Gamma_C)=\sqrt{d_{H}(\mathcal{C}_1)}=\sqrt{\alpha_1 n}$ for large $L.$ Recalling Equation \eqref{eqpackeff}, writing $\rho_{\text{pack}}(\Gamma_C)=\tfrac{d_{\min}(\Gamma_C)/2}{r_{\text{eff}}(\Gamma_C)}$ and considering the asymptotic volume of a unit ball $V_n \approx \left(\tfrac{2\pi e}{n}\right)^{n/2},$ we obtain, for large $L$ and $n,$ the following formula for the packing efficiency
\begin{eqnarray}\label{packc}
\rho_{\text{pack}}(\Gamma_{C}) = \dfrac{\sqrt{\alpha_1 \pi e}}{\sqrt{2} \cdot 2^{H(\alpha_1)} \cdot ~ \dots ~ \cdot 2^{H(\alpha_1/2^{2(L-1)})}}.
\end{eqnarray}

\begin{figure}[H]
\begin{center}
		\includegraphics[height=5.5cm]{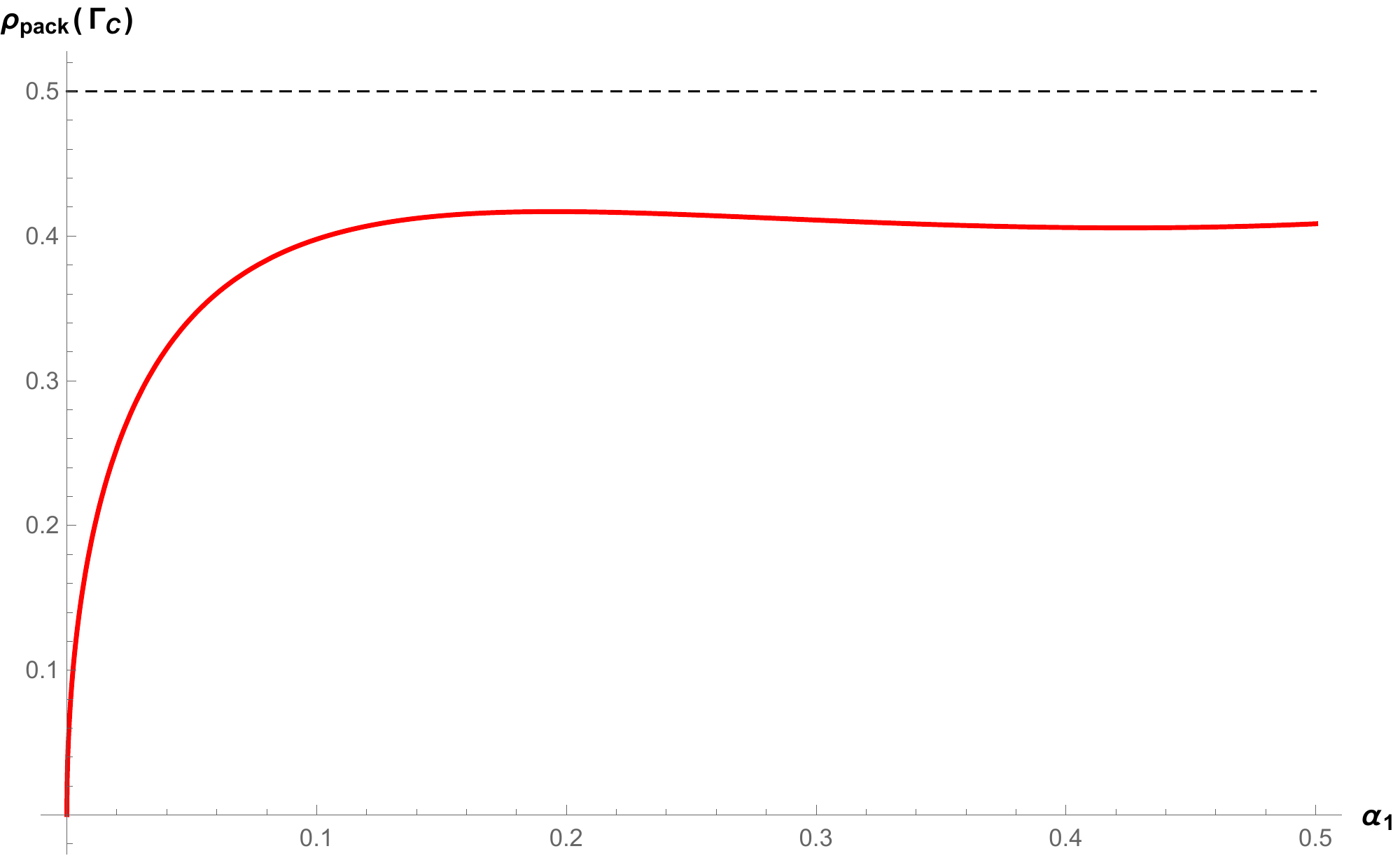}  
\caption{Packing efficiency of Erez ``GVB achieving'' Construction C}
\label{graph}
\end{center}
\end{figure} 

Figure \ref{graph} shows Equation \eqref{packc} as a function of $\alpha_1.$ We can see that the highest value is achieved by $\alpha_1 \approx 0.195$ and a correspondent packing efficiency of $\rho_{\text{pack}}(\Gamma_{C}) \approx 0.4168,$ which is strictly below the Minkowski bound $\rho_{\text{pack}}=\tfrac{1}{2},$ asymptotically achieved by Construction $C^\star,$ as discussed in Section \ref{secloeliger} above.

\section{Conclusion}

	In this paper we provide a detailed investigation about the geometric uniformity of Construction C, including the description of how to produce general geometrically uniform constellations. We introduce a new method of constructing multilevel constellations, denoted by Construction $C^\star,$ which is a generalization of Construction C and is inspired by bit-interleaved coded modulation (BICM). In regard to this construction, we explore some of its properties, including an asymptotic analysis comparing Constructions $C^{\star}$ and C in terms of their packing efficiencies.
	
	Perspectives for future work include changing the natural labeling $\mu$ to the Gray map, which is the standard mapping used in BICM, developing a suitable decoding algorithm for Construction $C^\star,$ taking advantage of the structure of the main code $\mathcal{C} \subseteq \mathbb{F}_{2}^{nL},$ and extending our results to codes defined over a general $q-$ary alphabet. In addition to that, focusing on practical applications, it could be interesting to combine both Constructions $C$ and $C^\star$ in a hybrid scheme. 


%

%


\section*{Acknowledgment}

The authors would like to thank the reviewers and the Associated Editor for their insightful suggestions, and also acknowledge Uri Erez and Or Ordentlich for fruitful discussions about the asymptotic properties of Construction C and $C^\star.$ CNPq (140797/2017-3, 312926/2013-8, 313326/2017-7) and FAPESP (2013/25977-7) supported the work of Maiara F. Bollauf and Sueli I. R. Costa. Ram Zamir was supported by the Israel Science Foundation (676/15).

\ifCLASSOPTIONcaptionsoff
  \newpage
\fi


%
%
%
%
%




\end{document}